\documentclass[letterpaper,longbibliography,english,reprint,aps,superscriptaddress,nofootinbib]{revtex4-2}

\usepackage{babel,calc,amsmath,amsthm,amssymb,graphicx,subfigure}
\graphicspath{{figures/}{figures/SM/}}

\usepackage[dvipsnames]{xcolor}
\usepackage[T1]{fontenc}
\usepackage{mathdots}
\usepackage{geometry}
\usepackage[unicode=true]{hyperref}
\usepackage{booktabs}
\usepackage{mathtools}
\usepackage{multirow}
\usepackage{lipsum}
\usepackage[normalem]{ulem}

\usepackage{tikz,adjustbox}
\usetikzlibrary{quantikz2}

\hypersetup{
     colorlinks=true,
     linkcolor=Blue,
     citecolor=Blue,
     urlcolor=Blue,
 }

\newcommand{\abs}[1]{\left\vert#1\right\vert}
\renewcommand{\ket}[1]{\left\vert#1\right\rangle}
\renewcommand{\bra}[1]{\left\langle#1\right\vert}
\renewcommand{\braket}[1]{\left\langle#1\right\rangle}

\newcommand{\tr}{\rm tr}
\newcommand{\pr}{{\rm pr}}

\newtheorem{theorem}{Theorem}
\newtheorem*{theorem*}{Theorem}

\theoremstyle{definition}
\newtheorem{definition}{Definition}
\newtheorem*{definition*}{Definition}

\begin{document}
\renewcommand{\figurename}{Fig.}

\newcommand{\methodsname}{End Matter}
\newcommand{\smname}{Supplementary Material}

\renewcommand{\sectionautorefname}{Sec.}
\renewcommand{\tableautorefname}{Table}

\title{\mbox{Testing a continuous-variable noncontextuality inequality with a hybrid-encoded system}}

\affiliation{Center for Hybrid Quantum Networks (Hy-Q), The Niels Bohr Institute, University of Copenhagen, Copenhagen \O, Denmark}

\affiliation{Center for Macroscopic Quantum States (bigQ), Department of Physics, \\Technical University of Denmark, Fysikvej 307, 2800 Kongens Lyngby, Denmark}

\author{Yu~Meng}
\altaffiliation{\{ymeng, zheli\}@dtu.dk}
\affiliation{Center for Hybrid Quantum Networks (Hy-Q), The Niels Bohr Institute, University of Copenhagen, Copenhagen \O, Denmark}
\affiliation{Center for Macroscopic Quantum States (bigQ), Department of Physics, \\Technical University of Denmark, Fysikvej 307, 2800 Kongens Lyngby, Denmark}

\author{Ying~Wang}
\author{Clara~Henke}
\affiliation{Center for Hybrid Quantum Networks (Hy-Q), The Niels Bohr Institute, University of Copenhagen, Copenhagen \O, Denmark}
\author{Nikolai Bart}
\author{Arne Ludwig}
\affiliation{Experimental Physics VI, Faculty for Physics and Astronomy, Ruhr University Bochum, Universit{\"a}tsstrasse 150, 44801 Bochum, Germany }
\author{Peter Lodahl}
\affiliation{Center for Hybrid Quantum Networks (Hy-Q), The Niels Bohr Institute, University of Copenhagen, Copenhagen \O, Denmark}

\author{Jonas~S.~Neergaard-Nielsen}
\author{Ulrik~L.~Andersen}
\affiliation{Center for Macroscopic Quantum States (bigQ), Department of Physics, \\Technical University of Denmark, Fysikvej 307, 2800 Kongens Lyngby, Denmark}

\author{Leonardo Midolo}
\affiliation{Center for Hybrid Quantum Networks (Hy-Q), The Niels Bohr Institute, University of Copenhagen, Copenhagen \O, Denmark}

\author{Zheng-Hao~Liu}
\altaffiliation{\{ymeng, zheli\}@dtu.dk}
\affiliation{Center for Macroscopic Quantum States (bigQ), Department of Physics, \\Technical University of Denmark, Fysikvej 307, 2800 Kongens Lyngby, Denmark}

\date{\today}

\begin{abstract}
    Continuous-variable quantum systems are promising candidates for quantum computing and quantum information processing. It is widely known that quadrature measurements on Gaussian continuous-variable systems admit a noncontextual hidden-variable description.
    Here, we demonstrate that this description fails once the same Gaussian correlations are instead probed using Hadamard tests.
    We realize the test with a hybrid discrete--continuous-variable system---the polarization and spatial modes of a single photon deterministically generated from an InAs/GaAs quantum emitter, with the controlled operations being the phase-space displacements selected through the Gottesman--Kitaev--Preskill correspondence.
    By directly measuring the correlations of the operators, we observe a postselection-free violation of the noncontextual hidden-variable inequality by 380 standard deviations and independently bound the residual noncommutativity of the implemented operations.
    Our results open up new possibilities for studying fundamental quantum physics using photonic-encoded continuous-variable systems.
\end{abstract}

\maketitle

\textit{Introduction.}---
The strongest way to reveal the nonclassicality of quantum theory is arguably to compare its predictions with those of the hidden-variable models. By testing and violating hidden-variable inequalities, the key assumptions behind the hidden-variable models can be refuted. The test of Bell inequality shows that quantum theory cannot be explained by local realism\,\cite{Bell66, clauser1969proposed,freedman1972experimental}. Today, we know that the nonlocality belongs to the larger class of nonclassicality called Kochen--Specker contextuality\,\cite{KS67,budroni2022kochen}. Contextuality\,\cite{note1} refers to the counterintuitive fact that the outcome of a quantum measurement must also depend on the choice of measurement context, i.e., the other compatible observables simultaneously measured. Contextuality reveals the irreconcilability between noncontextual hidden-variable models and quantum theory. It is also a crucial resource for quantum computing\,\cite{howard2014contextuality,bermejo2017contextuality,booth22}.

There are two approaches to encoding information in quantum systems, namely, in the discrete-variable (DV) and continuous-variable (CV) degrees of freedom, with their advantages complementing each other. Both approaches promise to bring strong applications like fault-tolerant quantum computing and long-distance quantum communication. 
In the setting of testing hidden-variable models, the CV approach bears a fundamental importance: there exist quantum commuting correlations in infinite-dimensional systems that are strictly stronger than any quantum product correlations in arbitrarily high-dimensional systems\,\cite{Ji20,Cabello23}. These correlations may only be observed in hidden-variable tests of CV systems. However, to date, most of the research focus has been devoted to DV systems: significant loophole-free tests of nonlocality and contextuality have been achieved in a variety of DV systems, leading to applications like device-independent quantum key distribution and randomness expansion\,\cite{zhang2022device,pironio2010random,christensen2013detection}. The CV counterpart remains a largely untapped frontier. 

To test hidden-variable inequalities in CV systems, a well-known challenge is that when the system has a globally positive Wigner function, it can be used as a noncontextual hidden-variable description for quadrature measurements\,\cite{Bell87} without considering the post-measurement state. Consequently, homodyne detection---the most accessible CV measurement toolbox that reveals the Wigner function but destroys the quantum state---cannot create a violation of any noncontextuality inequality with Wigner-positive states, e.g., the Gaussian states. 
Previous proposals and experiments for achieving such a violation are either technically demanding by requiring non-Gaussian states or measurements\,\cite{Banaszek98,McKeown11,Vlastakis15}, precise displacement operations\,\cite{brask2012robust,bjerrum2023proposal,ishihara2025long}, or conceptually weaker than the DV counterpart by having to \textit{infer} DV observable values with CV measurements and calibration data\,\cite{dAngelo06,Thearle18} or requiring postselection to achieve violation\,\cite{Kuzmich00,Babichev04}.

\begin{table*}[t!]
        \centering
        \raisebox{24pt}{(a)\;\,}
        \begin{tabular}{c|ccc}
        \hline
            ${\cal O}_{jk}$ & $k=1$ & $k=2$ & $k=3$ \\ \hline
            $j=1$ & ${\cal D}_x(-q_0)$ & ${\cal D}_y(-ip_0)$ & $-{\cal D}_x(q_0)\,{\cal D}_y(ip_0)$ \\ 
            $j=2$ & ${\cal D}_y(-q_0)$ & ${\cal D}_x(-ip_0)$ & $-{\cal D}_x(ip_0)\,{\cal D}_y(q_0)$ \\ 
            $j=3$ & ${\cal D}_x(q_0)\,{\cal D}_y(q_0)$ & ${\cal D}_x(ip_0)\,{\cal D}_y(ip_0)$ & ${\cal D}_x(-q_0-ip_0) {\cal D}_y(-q_0-ip_0)$ \\
        \hline
        \end{tabular}
        \qquad\quad
        \raisebox{24pt}{(b)\;\,}
        \begin{tabular}{c|ccc}
        \hline
            ${\cal O}_{jk}^{\rm DV}$ & $k=1$ & $k=2$ & $k=3$ \\ \hline 
            $j=1$ & $\sigma_1^1$ & $\sigma_2^3$ & $-\sigma_1^1\sigma_2^3$ \\
            $j=2$ & $\sigma_2^1$ & $\sigma_1^3$ & $-\sigma_2^1\sigma_1^3$ \\ 
            $j=3$ & $\sigma_1^1\sigma_2^1$ & $\sigma_1^3\sigma_2^3$ & $\sigma_1^2\sigma_2^2$ \\
        \hline
        \end{tabular}
        \caption{(a) The CV-version and (b) the DV-version of the Peres--Mermin square proof of quantum contextuality. The operators in (a) can be obtained by replacing the Pauli operators in (b) with the displacement operators via the Gottesman--Kitaev--Preskill correspondence. Here, ${\cal D}_n(\alpha)=\exp (\alpha {\hat{a}}^\dagger - \alpha^* {\hat{a}})$ is the displacement operator, with the subscript $n\in\{x, y\}$ labeling the modes and the real numbers $q_0p_0=\pi/2$ ensuring the maximal Pauli-like anti-commutativity for different displacement operators acting on the same mode. $\sigma_n^m$ is the $m$-th Pauli matrices of the $n$-th particle. The tensor product notations between different modes/particles are suppressed.}
        \label{tab:obs}
    \end{table*}
    
Here, we address these challenges and experimentally test a noncontextuality inequality by directly probing the CV correlations without preparing any non-Gaussian target state and without postselection.
Our starting point is a CV analogy of the Peres--Mermin square proof of contextuality composed of phase-space displacement operators. We propose to probe the correlations there with a Hadamard test that implements controlled-Gaussian operations on a CV system assisted by a DV control qubit. The hybrid system is encoded on the polarization and spatial degrees of freedom of a single photon, in which we design and realize controlled-displacement operations.
Moreover, we employ single photons generated deterministically from a semiconductor quantum emitter\,\cite{uppu2020scalable, loredo2026deterministic} in the experiment. The high purity of the single photons ensures that the individual events and the outcomes of Hadamard tests are well-defined. 
This way, we circumvent direct, destructive quadrature measurements and violate a noncontextuality inequality with controlled-Gaussian operations.
Finally, we bound the noncommutativity of these realized operations with another family of Hadamard-like tests and justify that the observed violation cannot be attributed to the experimental imperfection.

\textit{Contextuality in CV systems.}---We consider the extension of the Peres--Mermin square game\,\cite{mermin1990simple,Peres90,Mermin93,Cabello08}, a construction of state-independent contextuality, into the CV system. Specifically, Plastino and Cabello have proved\,\cite{plastino2010state} that the following inequality:
\begin{align}
{\cal L} := \left|-\sum_{k=1}^3\left\langle\mathcal{O}_{1 k} \mathcal{O}_{2 k} \mathcal{O}_{3 k}\right\rangle+\sum_{j=1}^3\left\langle\mathcal{O}_{j 1} \mathcal{O}_{j 2} \mathcal{O}_{j 3}\right\rangle\right|
\nonumber\\ \stackrel{\text { NC }}{\leqslant} 3 \sqrt{3},
\label{NCHVeq0}
\end{align}
holds for all noncontextual hidden-variable models, but can be violated by any two-mode CV quantum system up to the algebraic maximum, where every individual term evaluates to 1. 
Here, $\braket{\cdot}$ means expectation value, and all the observables $\cal O$ in the bracket must mutually commute to make the expectation value experimentally measurable. 
The observables in Eq.\,\eqref{NCHVeq0} for obtaining the maximal violation can be chosen according to \autoref{tab:obs}.(a) as phase-space displacement operators. The proposal is later generalized and tailored to trapped ion/mechanical oscillator systems\,\cite{Asadian2015}.

Three remarks on Eq.\,\eqref{NCHVeq0} are in order. Firstly, it is state-independent. The quantum violation can be observed with a Gaussian, Wigner-positive target state. The only required non-Gaussian operation is the qubit-controlled displacement on it. Secondly, as also pointed out in Reference\,\cite{plastino2010state}, the displacement operators are not observables. However, the real and imaginary parts of the displacement operators are observables and thus can have predefined values in noncontextual theory. Later, we will utilize this property to measure the real parts of the operators. 
Thirdly, the observables here are linked to their DV counterparts (cf.\,\autoref{tab:obs}.(b)) via the Gottesman--Kitaev--Preskill algebraic correspondence. By replacing the Pauli operators with the displacement operators: ${\cal O}^{\rm DV}_{jk}\to{\cal O}_{jk}$ and balancing the signs of displacements, the operators in \autoref{tab:obs}.(a) can be recovered. As a consequence of this correspondence, the different displacement operators acting on the same mode must anti-commute with each other like the Pauli operators; in order to fulfill this, the constants $q_0$ and $p_0$ must satisfy $q_0p_0=(2k-1)\pi/2, k\in\mathbb{Z}$. We choose $k=1$ throughout this work.

\textit{Hadamard test.}---Despite the state-independent nature of the inequality\,\eqref{NCHVeq0}, it still cannot be tested using only quadrature measurements. In the current case, it is both because incompatible quadratures will be involved during the measurement sequences, and that measurements on a continuous spectrum always destroy the quantum state\,\cite{ozawa1984quantum}. In general, sequential measurements---commonly used to reveal contextuality---cannot be applied to CV systems without compromising measurement sharpness.

\begin{figure*}[t]
\centering
\includegraphics[width=0.99\textwidth]{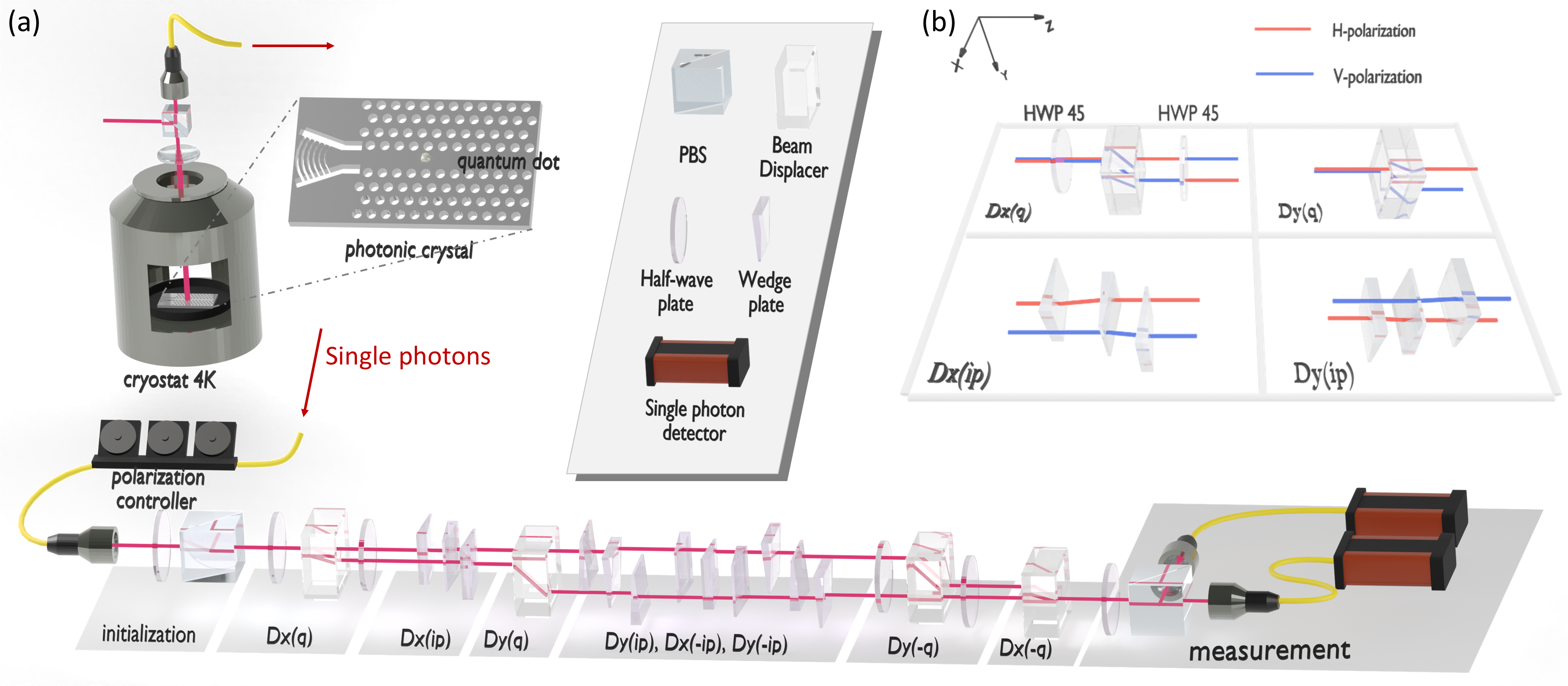}
\caption{Experimental scheme. (a) Setup for measuring the real expectation of the displacement operators. The single photons are generated from the quantum dot in the cryostat and collected with an objective system. They propagate through a sequence of polarization-controlled $x$-and $p$-displacement boxes acting on the photonic spatial modes. A polarization measurement reveals the expectations of the CV operators. The figure shows the configuration for measuring the operators in the third row of the CV Peres--Mermin square (cf.\,\autoref{tab:obs}(a)); other rows and columns are designed similarly by replacing the displacement boxes. (b) Displacement operations. We realize $q$- and $p$-displacement operators required in the CV Peres--Mermin square with combinations of beam displacers, wedge plates, and phase retarders. HWP half-wave plate.}
\label{Fig:setup}
\end{figure*}

To overcome this difficulty, we use a Hadamard test scheme to indirectly probe the expectation values needed for inequality \eqref{NCHVeq0}. The Hadamard test is a quantum algorithm that allows one to estimate the real (or imaginary) part of the expectation value of a unitary operator $\cal O$ for a given quantum state $|\psi\rangle$. It works by introducing an ancilla (control) qubit in a superposition state $\ket{+}=(\ket{0}+\ket{1})/\sqrt{2}$ using a Hadamard gate, applying the desired controlled-$\cal O$ gate, and then measuring the ancilla qubit in the computational basis. 
Explicitly, the probability of measuring the ancilla in the $|+\rangle$ state, given the real part of the expectation value of $\cal O$, is $(1+{\rm Re} \braket{\cal O})/{2}.$
Hence, by substituting the unitary operator with the product of the operators in each row or column of the Peres--Mermin square, we obtain the expectation values needed to observe the CV contextuality.

Our product-measurement method is operationally different from the sequential measurement in a standard contextuality test. The three unitaries are implemented as one controlled operation, and the values of the individual observables are never resolved. The outcome agrees with the standard procedure when the three operators mutually commute; this is also the case in any noncontextual model where the value assigned to the product equals the product of the values assigned individually. Our experiment therefore inherits the assumption that the realized compatibility is sufficient for this identification to hold. Later, we will quantify the residual deviation from this assumption. We note that the sequential readout of multiple operator values via Hadamard test\,\cite{Asadian14,Fluehmann18} should be extendable to the optical setup by resetting the polarization qubit and duplicating the gates\,\cite{Amselem09,dkqu21}.

\textit{A single-photon CV system.}---We encode a qubit and a two-mode CV system on the polarization and spatial modes of a single photon, and realize the controlled displacement operation between the DV qubit and the CV qumodes, where the encoding of the two-mode CV state is based on the observation that, under the plane wave approximation, the two-dimensional spatial wavefunction of a photon $\Psi(x,y)$ can be considered as a two-mode CV system, where $x,y$ are the spatial coordinates. 

The main challenge in the experiment is to accurately realize the controlled $q$- and $p$-displacement operators onto the two spatial modes. 
In Fig.\,\ref{Fig:setup}(b), we summarize the toolboxes for implementing all the operators in the CV Peres--Mermin square. We denote the $z$-axis as the propagation direction of the single photons. The $x$- and $y$-axes are defined as the direction of horizontal ($\ket{H}$) and vertical ($\ket{V}$) polarizations, indicated by the red and blue lines, respectively. We map the computational basis to the ancilla qubit states as $\ket{H}\leftrightarrow\ket{0}, \ket{V}\leftrightarrow\ket{1}$.

A $q$-displacement corresponds to moving the spatial wavefunction of the photon by some distance along the displacement’s direction. Its polarization-controlled version can be realized with beam displacer, a birefringent crystal that separates the photons with different polarizations. In our experiment, it separates the two polarization eigenstates by $q_0=3$\,mm for the $\lambda=940$\,nm photons, along the $xy$-component of the optical axis of the crystal, and can be freely rotated along the $z$-axis. 
To realize the controlled $D_y(q_0)$-operation, we rotate the beam displacer so its optical axis is in the $yz$-plane. For the controlled $D_x(q_0)$-operation, we add a pair of half-wave plates set at 45 degrees before and after the beam displacer, whose optical axis is in the $xy$-plane.

A $p$-displacement corresponds to adding a specific phase proportional to the coordinate on the displacement axis to the wavefunction. This corresponds to a linear phase modulation of the photon proportional to its $x$- or $y$-coordinate and is thus equivalent to a small deflection of the propagation direction of the beam. We realize controlled $p$-displacement using specially designed wedge plates with an angle of 1$^\circ$. Three such plates are arranged as a single group with their apices parallel to one another. The first and third plates are oriented toward the same direction, while the central plate is oriented in the opposite direction.
The three wedge plates are arranged such that each plate is horizontally (vertically) displaced relative to the preceding one, ensuring that photons of either polarization traverse two wedge plates. Finally, by rotating the last wedge plate along the $yz$- or $xz$-plane, we can realize the controlled $p$-displacement $D_x(ip)$ or $D_y(ip)$. 

As mentioned earlier, the specific controlled $q$- and $p$-operators to be realized in experiments should satisfy $q_0p_0=\pi/2$, to keep the anti-commutation relation the same as the Pauli operators. 
This entails that, the $p$-displaced beam should be deflected by $p_0\lambda/2\pi = 78.3\,\mathrm{\text\textmu rad}$.
Such a small deflection is achieved by strictly calibrating the position of photon spots passing each BD with the scanning slit optical beam profiler (BP209-VIS, Thorlabs) using an alignment laser at the same wavelength of the single photons (Toptica DLC, CTL 950). First of all, when there is no $p$-displacement, all interferences constituted only by BDs must be strictly parallel to each other so that no extra phases will be introduced. After this is calibrated, $p_0$ can be characterized 
by inserting a removable mirror in front of the second beam displacer, reflecting the two beams of different polarization modes to the far-field (3\,m from the wedge plates in our case). The distance between the two beams will change by 235$\,\mathrm{\text\textmu m}$ when the $p$-displacement is present. We use the beam profiler to calibrate this distance shift with high precision.

\textit{Experiment.}---We use deterministic single photons to observe the violation of the noncontextuality inequality in the CV Peres-Mermin square.
The high purity of the single photons plays a crucial role here: it ensures the binary, exclusive outcome for every round of the experiment, as required in the definition of the Hadamard test. More specifically, the high purity of the single-photon source precludes simultaneous clicks in both photodetectors, preventing ambiguous outcomes in the Hadamard test. 
Multiphoton events in photonic experiments prevent a direct mapping from photodetection results to projection measurement outcomes, often necessitating additional postselection\,\cite{zhang2019experimental} and could affect the performance of the system\,\cite{meng2025contextuality}.
Our experiment does not suffer from this issue.

We use a semiconductor quantum emitter as a single-photon source. 
The single-photon emitter is an Indium Arsenide (InAs) quantum dot embedded in a suspended Gallium Arsenide (GaAs) membrane\,\cite{lodahl2015interfacing}. Photonic crystal structures are fabricated around the quantum dot to couple the emitted single photons with near-unity efficiency. The sample is placed in a 4\,K closed-cycle cryostat, and emitted single photons are collected via a high-efficiency focusing grating coupler using a microscope objective and then sent to the optical table through fibers. A single quantum-dot transition at 940 nm is resonantly excited with a pulsed mode-locked laser (Picus Q, repetition rate 80\,MHz), resulting in a 1.5\,MHz count rate measured with an avalanche photodiode (Excelitas SPCM-CD3371H) directly at the output of the cryostat. 
As shown in \smname\,\cite{SM}\nocite{Procopio15,Stromberg23,Rozema24}, we use the second-order correlation function $g^{(2)}(0)$ to evaluate the purity of the single photons. It is done by a Hanbury Brown and Twiss intensity interferometry. We get $g^{(2)}(0)=(0.83\pm0.05)\% $ for our quantum dot candidate.

Based on the Hadamard test, we implement six experimental setups, corresponding to the rows and columns of the Peres--Mermin square.
We present all the measurement results in the \smname\,\cite{SM} and show the exemplary setup for measuring the real expectation of the operators in the third row in Fig.\,\ref{Fig:setup}(a). By that $|z|\ge \left|{\rm Re}(z)\right|\; \forall z\in\mathbb{C}$, we ignore the imaginary part of the expectation values and obtain a conservative estimation of $\cal L$ in Eq.\,\eqref{NCHVeq0}. The visibilities of the Hadamard test setup are as high as 99.3\%. For each context, we set the click window to 1\,s and record 100 windows of data to acquire the expectation value and standard uncertainty.
Our experimental result was
\begin{align}
{\cal L} = 5.9398 \pm 0.0019,
\end{align}
which violates the inequality \eqref{NCHVeq0} by 380 standard deviations.

\begin{figure}[b]
    \centering
    \adjincludegraphics[width=0.92\columnwidth, clip=true]{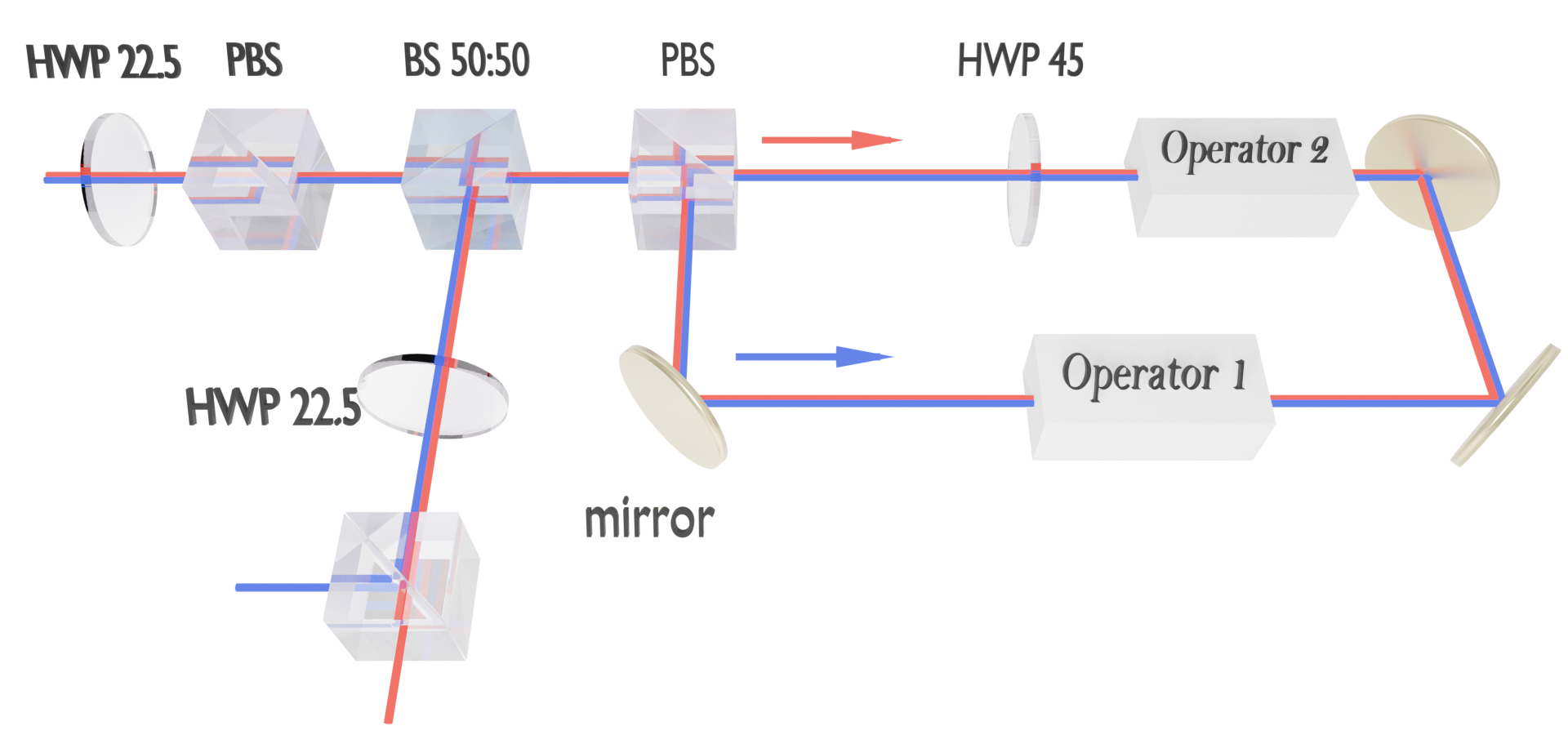}
    \caption{Commutativity test. This Hadamard test-like setup is used to measure the degree of incompatibility between two operators.}
    \label{Fig:commute}
\end{figure}

\textit{Commutativity.}---Note that 
applying the Hadamard test to products of operators requires these operators to commute pairwise. We now verify that this condition is satisfied in our experiment.
Inspired by the protocol in Reference\,\cite{ngao23}, we devise a commutativity test for CV unitary operators that is well-suited for our platform.
As shown in Fig. \ref{Fig:commute}, we build a photonic quantum switch-like setup to control the sequence of two operators ${\cal O}^A, {\cal O}^B$ for the CV system. The polarization of the photon still acts as the control qubit. A half-wave plate at $22.5^\circ$ initializes the ancilla qubit to the superposition state. Then, the single photons are injected into a Sagnac interferometer formed by a polarization beam splitter (PBS), a half-wave plate at 45$^\circ$, and the two controlled operators ${\cal O}^A, {\cal O}^B$. The photons undergo a coherent superposition of two alternative orders of displacements on the transverse modes. When the polarization state is $\ket{H}$, the unitary ${\cal O}^A$ is implemented followed by ${\cal O}^B$ on the target; when the ancilla state is $\ket{V}$, the applied sequence is ${\cal O}^B$ followed by ${\cal O}^A$. After leaving the loop by combining on the PBS, the photons exiting from a beam splitter are sent to polarization analysis in the $\ket{\pm}$ basis.

We prove in the \smname\,\cite{SM} that the probability of measuring the ancilla to be in the $\ket{-}$ state, $\kappa({\cal O}^A,{\cal O}^B)$, is linked to the degree of noncommutativity of the operators: $\kappa({\cal O}^A,{\cal O}^B) = {\braket{\left\vert [{\cal O}^A,{\cal O}^B]\right\vert^2} }/{4}.$ We measure $\kappa$ for each pair of operators that belong to the same context and present the results in the \smname, where we also show how the degree of noncommutativity can be used to upper bound the calibration error of the displacement operations. We measure the mean $\kappa$-values to be $(1.74\pm0.30)\times10^{-2}$, with the maximal instance below 0.023, showing reasonably good commutation relations between the pairs of observables. 
Unfortunately, although existing frameworks\,\cite{Guehne10,Szangolies13} allow non‑ideal incompatibility to be incorporated into noncontextuality inequalities, applying them to our data raises the noncontextual bound in Eq.\eqref{NCHVeq0} above the quantum maximum. As discussed in the \smname, observing a violation under such corrected bounds would require extreme visibility well beyond those achievable in current photonic platforms. We believe this is an interesting topic for future research.
  
\textit{Conclusion.}---We report the first violation of a CV noncontextuality inequality using a hybrid DV--CV encoded photonic system.
Despite the impossibility of non-destructive sharp measurement in such a system, our experiment is made possible by converting the measurement to the Hadamard test.
Comparing with the previous tests concerning CV hidden-variable theory, our Hadamard-like tests directly probe the correlations between CV operators. Moreover, they are either postselection-free or can be promoted to postselection-free by using non-reciprocal elements such as optical circulators. Our methodology thus provides a method particularly suitable for study fundamental physics in CV quantum systems, and could further activate the power of hybrid DV--CV systems\,\cite{andersen2015hybrid, yliu25, Kannan26} in quantum information science.

We also identify limitations of the present work that prevent it from being developed into a loophole-free test of hidden-variable theory. Firstly, the Hadamard test is not quantum-assumption-free: although we do not need to correct the measurement results with the quantum calibration data, reading the qubit outcomes as expectation values of CV observables relies on a quantum model of the controlled displacements. 
Secondly, the compatibility within each context holds only to a quantified level. We assume the implemented operations obey the intended algebra and then certify them independently in the same setup, but the resulting commutativity tests are inefficient for bounding the effect of residual incompatibility. In our view, closing the compatibility loophole would require resolving the individual observables instead of their product, which we leave to future work.
Thirdly, the finite detection efficiency and photon loss between the source and detection necessitate the fair-sampling assumption.

While the majority of CV photonic experiments encode the CV modes in the quadrature values of optical fields, our experiment uses the auxiliary modes of deterministic single photons, specifically in the spatial degree of freedom, as the encoded object. This modal encoding method has found applications like quantum-enhanced metrology\,\cite{yin2023experimental,hou2019control}; schemes for using it in quantum computing are also proposed\,\cite{tasca2011continuous,fabre2020generation,yamazaki2023linear,descamps2024gottesman}. In the current case, it may set the stage for observing quantum commuting correlations beyond tensor-product correlations\,\cite{Ji20,Cabello23}. 
When combining the modal encoding with on-demand single photons, both the encoding of the Gottesman--Kitaev--Preskill state and a set of single-mode gates can be realized deterministically. We believe this method could have applications in quantum computing and could warrant further investigation.

    \textit{Acknowledgements.}---We would like to thank A.~Cabello, A.~Tavakoli, Z.-P.~Xu, and J.~B.~Brask for insightful discussions. 
    Y.M., Y.W., C.H., P.L., and L.M. acknowledge support from Danmarks Grundforskningsfond (DNRF 139, Hy-Q Center for Hybrid Quantum Networks) and
    the European Research Council under the European Union's Horizon 2020 research and innovation program (No. 949043, NANOMEQ).
    Y.M., J.S.N.-N., U.L.A., and Z.-H.L. acknowledge support from
    Danish National Research Foundation, Center for Macroscopic Quantum States (bigQ, DNRF0142), 
    EU project CLUSTEC (grant agreement no. 101080173), EU ERC project ClusterQ (grant agreement no. 101055224), NNF project CBQS (NNF 24SA0088433), 
    Innovation Fund Denmark (PhotoQ project, grant no. 1063-00046A), 
    Danish e-Infrastructure Consortium (project M\AA LE, grant no.~5260-00001B),
    Villum Fonden (project FENIX, grant no.~VIL76593),
    and the MSCA European Postdoctoral Fellowships (project GTGBS, grant no.~101106833).
    A.L. acknowledges funding from the BMFTR via projects QTRAIN no.\,13N17328, EQSOTIC no.\,16KIS2061 and QR.N no.\,16KIS2200.

\let\oldaddcontentsline\addcontentsline
\renewcommand{\addcontentsline}[3]{}

\let\addcontentsline\oldaddcontentsline

\newpage
\appendix
\onecolumngrid
\vspace{64pt}

\setcounter{equation}{0}
\setcounter{figure}{0}
\renewcommand{\theequation}{S\arabic{equation}}
\renewcommand{\thefigure}{S\arabic{figure}}

\begin{center}
    \bf \large Supplementary Material for ``Testing a continuous-variable noncontextuality \\ inequality with a hybrid-encoded system''
\end{center}

\tableofcontents

\section{Experimental data}

\begin{table*}[h]
        \centering
        \begin{tabular}{c|ccc}
        \hline
            ${\cal O}_{jk}$ & $k=1$ & $k=2$ & $k=3$ \\ \hline
            $j=1$ & ${\cal D}_x(-q_0)$ & ${\cal D}_y(-ip_0)$ & $-{\cal D}_x(q_0)\,{\cal D}_y(ip_0)$ \\ 
            $j=2$ & ${\cal D}_y(-q_0)$ & ${\cal D}_x(-ip_0)$ & $-{\cal D}_x(ip_0)\,{\cal D}_y(q_0)$ \\ 
            $j=3$ & ${\cal D}_x(q_0)\,{\cal D}_y(q_0)$ & ${\cal D}_x(ip_0)\,{\cal D}_y(ip_0)$ & ${\cal D}_x(-q_0-ip_0) {\cal D}_y(-q_0-ip_0)$ \\
        \hline
        \end{tabular}
        \caption{Operators to be realized in the CV-version of the Peres--Mermin square proof of quantum contextuality.}
        \label{tab:s-obs}
    \end{table*}

\begin{table}[h]
    \centering
    \begin{tabular}{rc|cccc}
    \toprule
        \hspace{12pt} Context & \hspace{24pt} Operator \hspace{24pt} & \hspace{12pt} $\bar N_+$ \hspace{12pt} & \hspace{12pt} $\bar N_-$ \hspace{12pt} & \hspace{12pt} ${\rm Re}\braket{\cdot}$ \hspace{12pt} & \hspace{12pt} SD \hspace{12pt} \\
    \midrule
        Row 1 & ${{\cal O}_{11}{\cal O}_{12}{\cal O}_{13}}$ & 405 & 112087 & -0.9928 & 0.0005 \\
        Row 2 & ${{\cal O}_{21}{\cal O}_{22}{\cal O}_{23}}$ & 347 & 98031 & -0.9929 & 0.0003 \\
        Row 3 & ${{\cal O}_{31}{\cal O}_{32}{\cal O}_{33}}$ & 340 & 56842 & -0.9881 & 0.0006 \\
        Column 1 & ${{\cal O}_{11}{\cal O}_{21}{\cal O}_{31}}$ & 76384 & 337 & 0.9912 & 0.0007 \\
        Column 2 & ${{\cal O}_{12}{\cal O}_{22}{\cal O}_{32}}$ & 88341 & 369 & 0.9916 & 0.0008 \\
        Column 3 & ${{\cal O}_{13}{\cal O}_{23}{\cal O}_{33}}$ & 89648 & 763 & 0.9831 & 0.0013 \\
        & $\cal L$ & & & 5.9398 & 0.0019 \\
    \bottomrule
    \end{tabular}
    \caption{Measurement results for the Hadamard test of the operators in the same row or column of \autoref{tab:s-obs}. $\bar N_+$ ($\bar N_-$) is the number of photon click events where the state of the ancilla qubit is measured to be $\ket{+}$ ($\ket{-}$), i.e., the photonic polarization state to be $\ket{H}$ ($\ket{V}$), averaged over 100 repetitions of experiment. ${\rm Re}\braket{\cdot}$ is the deduced real part expectation value of the operator. SD standard deviation (same below). $\cal L$ as defined in Eq. (1) of the main text.}
    \label{tab:rc}
\end{table}

\begin{table}[h]
    \centering
    \begin{tabular}{lc|cccc}
    \toprule
        \hspace{12pt} Context & \hspace{24pt} Operators \hspace{24pt} & \hspace{12pt} $\bar N_+$ \hspace{12pt} & \hspace{12pt} $\bar N_-$ \hspace{12pt} & \hspace{12pt} $\kappa$ \hspace{12pt} & \hspace{12pt} SD \hspace{12pt} \\
    \midrule
        \multirow{3}{*}{Row 1} & ${\cal O}_{11}, {\cal O}_{12}$ & 28286 & 392 & 0.0137 & 0.0007 \\
        & ${\cal O}_{12}, {\cal O}_{13}$ & 28257 & 416 & 0.0145 & 0.0007 \\
        & ${\cal O}_{13}, {\cal O}_{11}$ & 28151 & 396 & 0.0139 & 0.0006 \\
    \midrule
        \multirow{3}{*}{Row 2} & ${\cal O}_{21}, {\cal O}_{22}$ & 23132 & 388 & 0.0165 & 0.0029 \\
        & ${\cal O}_{22}, {\cal O}_{23}$ & 23050 & 381 & 0.0163 & 0.0023 \\
        & ${\cal O}_{23}, {\cal O}_{21}$ & 23512 & 376 & 0.0157 & 0.0021 \\
    \midrule
        \multirow{3}{*}{Row 3} & ${\cal O}_{31}, {\cal O}_{32}$ & 20207 & 358 & 0.0174 & 0.0012 \\
        & ${\cal O}_{32}, {\cal O}_{33}$ & 18552 & 420 & 0.0222 & 0.0012 \\
        & ${\cal O}_{33}, {\cal O}_{31}$ & 19623 & 351 & 0.0176 & 0.001 \\
    \midrule
        \multirow{3}{*}{Column 1} & ${\cal O}_{11}, {\cal O}_{21}$ & 24101 & 403 & 0.0164 & 0.0034 \\
        & ${\cal O}_{21}, {\cal O}_{31}$ & 22990 & 366 & 0.0157 & 0.002 \\
        & ${\cal O}_{31}, {\cal O}_{11}$ &  23579 & 375 & 0.0157 & 0.0024 \\
    \midrule
        \multirow{3}{*}{Column 2} & ${\cal O}_{12}, {\cal O}_{22}$ &  18703 & 334 & 0.0175 & 0.0014 \\
        & ${\cal O}_{22}, {\cal O}_{32}$ & 24007 & 504 & 0.0206 & 0.0023 \\
        & ${\cal O}_{32}, {\cal O}_{12}$ &  24136 & 489 & 0.0199 & 0.0011 \\
    \midrule
        \multirow{3}{*}{Column 3} & ${\cal O}_{13}, {\cal O}_{23}$ &  28390 & 420 & 0.0146 & 0.0012 \\
        & ${\cal O}_{23}, {\cal O}_{33}$ &  15687 & 361 & 0.0225 & 0.0012 \\
        & ${\cal O}_{33}, {\cal O}_{13}$ &  15188 & 361 & 0.0232 & 0.0014 \\
    \bottomrule
    \end{tabular}
    \caption{Measurement results for the commutativity test of pairs of operators in the same row or column of \autoref{tab:s-obs}. $\bar N_+$ ($\bar N_-$) is the number of photon click events where the state of the ancilla qubit is measured to be $\ket{+}$ ($\ket{-}$), i.e., the photonic polarization state to be $\ket{H}$ ($\ket{V}$), averaged over 100 repetitions of experiment. $\kappa=N_-/(N_+/N_-)$ is the deduced noncommutation fraction, $\abs{\braket{[{\cal O}^A,{\cal O}^B]}}^2/4$, of the pair of operators ${\cal O}^A, {\cal O}^B$.}
    \label{tab:comm}
\end{table}

\begin{figure}[b!]
    \centering
    \includegraphics[width=0.5\linewidth]{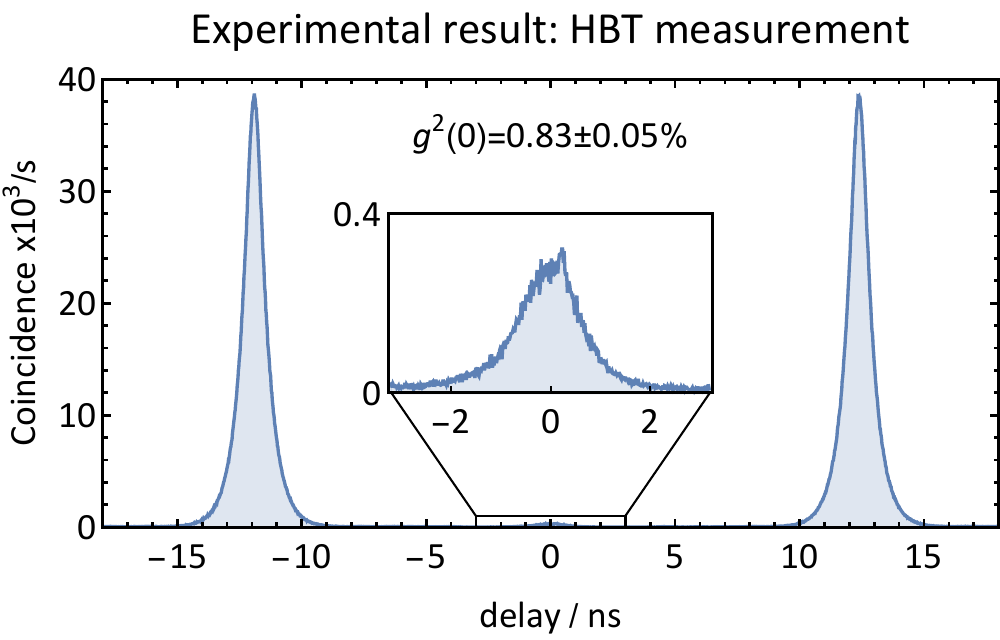}
    \caption{Purity of single-photon source. We measure the second-order correlation function $g^{(2)}(0)$ using the Hanbury-Brown and Twiss (HBT) intensity interferometry for $\pi$-pulse excitation. The $g^2(0)$ value is extracted from the amplitude of the central peak relative to the amplitude of the peaks at 13 ns delay away from it.}
    \label{fig:s-g2}
\end{figure}

\newpage

\section{Commutativity test}
\label{sec:s-commtest}

\begin{definition}[Commutativity test]
    The commutativity test is a quantum algorithm checking how well two unitary operators commute. The circuit as shown in \autoref{fig:commute-circ} works by applying the unitary ${\cal O}^A$ then ${\cal O}^B$ on the target state $\rho$ when the ancilla state is $\ket{0},$ and applying ${\cal O}^B$ then ${\cal O}^A$ on $\rho$ when the ancilla state is $\ket{1}.$ The ancilla is initialized in the maximal coherent state and measured on the Fourier basis.
\end{definition}

\begin{theorem}
    When the procedure of commutativity test is repeated many times, the probability of the measurement outcome to be $\ket{1}$ converges to $\, \tr(\rho|[{\cal O}^A, {\cal O}^B]|^2) /4.$ Here, $|{\cal O}|^2 := {\cal O}^\dagger {\cal O}$ denotes the modulus square of an operator.
\end{theorem}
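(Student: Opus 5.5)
We prove the statement by a direct state-vector computation of the circuit in \autoref{fig:commute-circ}. First we take the target to be pure, $\rho=\ket{\psi}\bra{\psi}$; the mixed case then follows by linearity, because the outcome probability is affine in $\rho$. The ancilla starts in $\ket{+}=(\ket{0}+\ket{1})/\sqrt{2}$. The controlled-order operation is $\ket{0}\bra{0}\otimes {\cal O}^B{\cal O}^A + \ket{1}\bra{1}\otimes {\cal O}^A{\cal O}^B$. It maps the initial state to
\begin{align}
\tfrac{1}{\sqrt{2}}\big(\ket{0}\otimes{\cal O}^B{\cal O}^A\ket{\psi}+\ket{1}\otimes{\cal O}^A{\cal O}^B\ket{\psi}\big).
\end{align}
We then rewrite this in the Fourier basis $\ket{\pm}$ using $\ket{0}=(\ket{+}+\ket{-})/\sqrt2$ and $\ket{1}=(\ket{+}-\ket{-})/\sqrt2$. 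The unnormalized target state attached to the outcome $\ket{-}$ is
\begin{align}
\tfrac12\big({\cal O}^B{\cal O}^A-{\cal O}^A{\cal O}^B\big)\ket{\psi}=-\tfrac12[{\cal O}^A,{\cal O}^B]\ket{\psi}.
\end{align}
The statement's ``outcome $\ket{1}$'' is this $\ket{-}$ outcome, since the Fourier basis measurement relabels $\ket{-}\mapsto\ket{1}$.

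By the Born rule, the probability of this outcome is the squared norm of that vector:
\begin{align}
\tfrac14\bra{\psi}[{\cal O}^A,{\cal O}^B]^\dagger[{\cal O}^A,{\cal O}^B]\ket{\psi}=\tfrac14\bra{\psi}\,|[{\cal O}^A,{\cal O}^B]|^2\ket{\psi},
\end{align}
where $|{\cal O}|^2={\cal O}^\dagger{\cal O}$ as in the statement. For a mixed state $\rho=\sum_i p_i\ket{\psi_i}\bra{\psi_i}$ we average over $i$, which gives $\tr(\rho\,|[{\cal O}^A,{\cal O}^B]|^2)/4$. The ``converges'' in the statement is just the law of large numbers applied to i.i.d.\ repetitions: the empirical frequency of $\ket{1}$ tends to this probability.

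As a sanity check, the complementary $\ket{+}$ outcome carries the vector $\tfrac12\{{\cal O}^A,{\cal O}^B\}\ket{\psi}$. Unitarity of ${\cal O}^A,{\cal O}^B$ then gives
\begin{align}
|[{\cal O}^A,{\cal O}^B]|^2+|\{{\cal O}^A,{\cal O}^B\}|^2=4\,\mathbb{1},
\end{align}
so the two probabilities sum to one.

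The only step needing care is matching conventions. One must pin down which ordering goes with $\ket{0}$ versus $\ket{1}$; this only flips the sign of the commutator and leaves the modulus unchanged. One must also fix which Fourier-basis label corresponds to the difference term. A further concern for CV operators on infinite-dimensional spaces is domain issues, but displacements are bounded unitaries, so everything is well defined.
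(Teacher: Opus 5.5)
Your proof is correct and is essentially the paper's argument: a direct state-vector computation for a pure target, where the final Fourier-basis readout isolates the branch $\tfrac12[{\cal O}^A,{\cal O}^B]\ket{\psi}$ (up to a sign set by the ordering convention), followed by extension to mixed $\rho$ by linearity. Your Born-rule step $\tfrac14\bra{\psi}[{\cal O}^A,{\cal O}^B]^\dagger[{\cal O}^A,{\cal O}^B]\ket{\psi}$ puts the factors in the order required by $|{\cal O}|^2={\cal O}^\dagger{\cal O}$, whereas the paper's intermediate expression ${\rm tr}(\rho\,[{\cal O}^A,{\cal O}^B][{\cal O}^A,{\cal O}^B]^\dagger)/4$ reverses them; this is harmless for the experiment's displacement pairs, whose commutator is a scalar multiple of a unitary and hence normal, but not for general unitaries.
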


\begin{figure}[t]
    \centering
    \begin{adjustbox}{width=0.5\columnwidth}
    \begin{quantikz}[column sep=1em] 
        \lstick{$\ket{0}$} & \gate{H} & \ctrl[open]{1} & \ctrl[open]{1} & \ctrl{1} & \ctrl{1} & \gate{H} & \meter{} & \setwiretype{c} \rstick{$A$} \\
        \lstick{$\rho$\,} && \gate{{\cal O}^A} & \gate{{\cal O}^B} & \gate{{\cal O}^B} & \gate{{\cal O}^A} && \rstick{}
    \end{quantikz}
    \end{adjustbox}
    \caption{The quantum circuit implementing the commutativity test. The first input mode is an ancillary qubit, and the second input mode is again allowed to have an unbounded dimension. ${\cal O}^A$ and ${\cal O}^B$ are the two unitary operators of interest. At the end of the circuit, the probability of the measurement outcome of the ancillary being $\ket{1}$ is $\tr(\rho|[{\cal O}^A, {\cal O}^B]|^2)/4$.}
    \label{fig:commute-circ}
\end{figure}
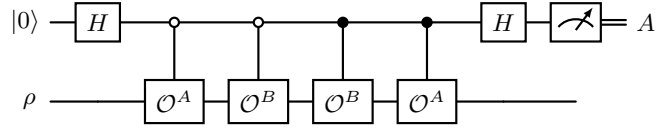

\begin{proof}
    Again by direct calculation. We start by assuming the initial target state to be a pure state: $\rho = \ket{\psi}\bra{\psi}.$ The probability is guaranteed by the linearity of quantum mechanics to be the same, and we will switch back to the density matrix notation later. The state evolves as:
    \begin{align}
        \color{Red} \ket{0}_{\rm A} & \color{RoyalBlue} \ket{\psi}_{\rm T} \color{Black} \overset{H}{\to} \color{Red} \frac{\ket{0}+\ket{1}}{\sqrt{2}} \color{RoyalBlue} \ket{\psi} 
        \color{Black} \nonumber \\
        & \overset{{\rm c}\text{-}{\cal O}s}{\longrightarrow} \frac{\color{Red} \ket{0} \color{RoyalBlue} {\cal O}^A{\cal O}^B\ket{\psi} \color{Black} + \color{Red} \ket{1} \color{RoyalBlue} {\cal O}^B{\cal O}^A\ket{\psi}}{\sqrt{2}} \color{Black} \nonumber \\
        &\overset{H}{\to} \frac{\color{Red} \ket{0} \color{RoyalBlue} \{{\cal O}^A,{\cal O}^B\}\ket{\psi} \color{Black} + \color{Red} \ket{1} \color{RoyalBlue} [{\cal O}^A,{\cal O}^B]\ket{\psi}}{2}.
    \end{align}
    The probability of finding the ancilla on $\ket{1}$ is then:
    \begin{align}
        \pr(A=&\,1)=\frac{\tr([{\cal O}^A,{\cal O}^B]\ket{\psi}\bra{\psi}[{\cal O}^A,{\cal O}^B]^\dagger)}{4}.
    \end{align}
    Finally, by substituting the initial state using $\ket{\psi}\bra{\psi}\to\rho$, we arrive at:
    \begin{align}
        \pr(A=&\,1) = \frac{\tr(\rho[{\cal O}^A,{\cal O}^B][{\cal O}^A,{\cal O}^B]^\dagger)}{4} = \frac{\tr(\rho|[{\cal O}^A,{\cal O}^B]|^2)}{4},
    \end{align}
    recovering the result in the proposition.
\end{proof}

The commutativity test is often related to the concept of a ``quantum switch'' or a quantum-controlled ``indefinite causal order.'' The quantum switch is a setup that does exactly what is described in \autoref{fig:commute-circ} but only calls each of the gates once. We point to References~\cite{Procopio15, Stromberg23} for its optics realization.

\section{Sensitivity to the calibration of the displacement amplitudes}
\label{sec:s-calib}

The observables of \autoref{tab:s-obs} are constructed so that a $q$-type and a $p$-type displacement acting on the same mode anticommute, which requires the product $q_0p_0$ to equal $\pi/2$. The two transverse modes are calibrated independently in the experiment, so the realized products may deviate from the ideal value, and from each other. We therefore write
\begin{align}
    (q_0p_0)_x = \frac{\pi}{2}+\chi, \qquad (q_0p_0)_y = \frac{\pi}{2}+\upsilon,
    \label{eq:s-calib-def}
\end{align}
and work out how both the noncontextuality test and the commutativity test depend on $\chi$ and $\upsilon$. The outcome is that only two of the six contexts are affected at all, so that the remaining four provide a calibration-independent reference against which the other two can be compared.

\subsection{Dependence of the measured quantities on miscalibration}

From the Weyl relation ${\cal D}(\alpha){\cal D}(\beta)=e^{(\alpha\beta^*-\alpha^*\beta)/2}{\cal D}(\alpha+\beta)$ one obtains, for two displacements acting on the same mode $n\in\{x,y\}$ with real coefficients $a$ and $b$,
\begin{align}
    {\cal D}_n(aq_0)\,{\cal D}_n(ibp_0) = e^{-2iab(q_0p_0)_n}\,{\cal D}_n(ibp_0)\,{\cal D}_n(aq_0).
    \label{eq:s-weyl}
\end{align}
Displacements acting on different modes commute exactly, and two displacements of the same type on the same mode commute exactly. A phase is picked up only when a $q$-type displacement meets a $p$-type displacement within the same mode.

Applying this rule to \autoref{tab:s-obs}, every pair of operators in the first two rows and in the first two columns either acts on different modes or consists of two parallel displacements. These twelve pairs therefore commute exactly, whatever the values of $\chi$ and $\upsilon$. The two remaining contexts do acquire phases. For the third row,
\begin{align}
    {\cal O}_{31}{\cal O}_{32} = e^{-2i[(q_0p_0)_x+(q_0p_0)_y]}\,{\cal O}_{32}{\cal O}_{31} = e^{-2i(\chi+\upsilon)}\,{\cal O}_{32}{\cal O}_{31},
\end{align}
because the two modes enter with the same sign, whereas for the third column,
\begin{align}
    {\cal O}_{13}{\cal O}_{23} = e^{-2i[(q_0p_0)_x-(q_0p_0)_y]}\,{\cal O}_{23}{\cal O}_{13} = e^{-2i(\chi-\upsilon)}\,{\cal O}_{23}{\cal O}_{13},
\end{align}
where the two modes enter with opposite signs and the deviation partially cancels. The remaining two pairs within each of these contexts acquire the same phase. Combining this with the result of \autoref{sec:s-commtest}, which for two unitaries obeying ${\cal O}^A{\cal O}^B=e^{i\varphi}{\cal O}^B{\cal O}^A$ gives the state-independent value
\begin{align}
    \kappa({\cal O}^A,{\cal O}^B) = \frac{\tr\left(\rho\abs{[{\cal O}^A,{\cal O}^B]}^2\right)}{4} = \sin^2\frac{\varphi}{2},
\end{align}
and evaluating the operator product of each context directly from Eq.\,\eqref{eq:s-weyl}, we obtain \autoref{tab:s-calib}.

\begin{table}[h]
    \centering
    \begin{tabular}{lcc}
    \toprule
        \hspace{12pt} Context \hspace{24pt} & \hspace{24pt} ${\rm Re}\braket{{\cal O}_{j1}{\cal O}_{j2}{\cal O}_{j3}}$ \hspace{24pt} & \hspace{24pt} $\kappa$ of each pair \hspace{12pt} \\
    \midrule
        Row 1 & $-1$ & $0$ \\
        Row 2 & $-1$ & $0$ \\
        Row 3 & $-\cos(\chi+\upsilon)$ & $\sin^2(\chi+\upsilon)$ \\
        Column 1 & $+1$ & $0$ \\
        Column 2 & $+1$ & $0$ \\
        Column 3 & $+\cos(\upsilon-\chi)$ & $\sin^2(\upsilon-\chi)$ \\
    \bottomrule
    \end{tabular}
    \caption{Ideal values of the six context products and of the intra-context commutativity parameter $\kappa$, as a function of the calibration errors $\chi$ and $\upsilon$ defined in Eq.\,\eqref{eq:s-calib-def}. Four of the six contexts are rigorously independent of the calibration. The third row is sensitive to $\chi+\upsilon$ and the third column to $\upsilon-\chi$, reflecting the relative sign with which the two modes enter.}
    \label{tab:s-calib}
\end{table}

The quantity tested in the main text follows immediately:
\begin{align}
    {\cal L}(\chi,\upsilon) = 4+\cos(\chi+\upsilon)+\cos(\upsilon-\chi) = 4+2\cos\chi\,\cos\upsilon,
    \label{eq:s-Lchi}
\end{align}
which reaches the algebraic maximum of $6$ at perfect calibration, and exceeds the noncontextual bound as long as $\cos\chi\cos\upsilon>(3\sqrt3-4)/2\approx0.598$. For $\chi=\upsilon$ this amounts to $\abs{\chi}<0.686$\,rad, i.e.\ $39^\circ$: a quantum violation survives even a $44\%$ error on $q_0p_0$, and the test is therefore very tolerant of miscalibration.

\subsection{Upper bounds from the experimental data}

Equation\,\eqref{eq:s-Lchi} and \autoref{tab:s-calib} allow the calibration to be bounded from the data already reported, without any additional measurement. Two points should be made before doing so. Firstly, $\chi$ and $\upsilon$ are properties of a particular optical arrangement. The noncontextuality test of Fig.\,1(a) of the main text and the commutativity test of Fig.\,2 are physically distinct setups, separately assembled and separately aligned, and each therefore has its own values of $\chi$ and $\upsilon$. We bound the two separately, and do not combine or cross-compare them. Secondly, within each setup the calibration-sensitive contexts are also the ones carrying the most optical elements, so that any part of their measured deviation which is in fact due to interference visibility is here charged, conservatively, to miscalibration. What follows are consequently upper bounds and not estimates.

\emph{The noncontextuality-test setup.} The four calibration-independent contexts of \autoref{tab:rc} have ideal value $\pm1$, so their measured magnitudes give the visibility of this setup directly: their mean is $\bar V=0.9921$, and the corresponding $0.8\%$ deficit must be attributed to a purely interferometric origin. Dividing the third row and third column by $\bar V$ and inverting \autoref{tab:s-calib} gives
\begin{align}
    \cos(\chi+\upsilon) &\geqslant 0.9881/0.9921 = 0.9959 &&\Rightarrow&& \abs{\chi+\upsilon}\leqslant 0.090, \nonumber \\
    \cos(\upsilon-\chi) &\geqslant 0.9831/0.9921 = 0.9909 &&\Rightarrow&& \abs{\upsilon-\chi}\leqslant 0.135.
    \label{eq:s-cal-2}
\end{align}

\emph{The commutativity-test setup.} The twelve pairs that commute for any calibration have a mean of $\bar\kappa_0=0.0164$ in \autoref{tab:comm}, with a context-to-context scatter of $0.0020$. Being rigorously insensitive to $\chi$ and $\upsilon$, these pairs cannot deviate from zero for any algebraic reason, so this floor measures the interference visibility of the quantum-switch setup alone---lower than that of the setup above, as its Sagnac geometry leads one to expect. The three pairs of the third row average $0.0191$ and those of the third column $0.0201$, exceeding the floor by $0.0027$ and $0.0037$ respectively. Neither excess is statistically significant against the combined uncertainty of $0.0022$, so we quote two-standard-deviation upper limits:
\begin{align}
    \sin^2(\chi+\upsilon) &\leqslant 0.0071 &&\Rightarrow&& \abs{\chi+\upsilon}\leqslant 0.084, \nonumber \\
    \sin^2(\upsilon-\chi) &\leqslant 0.0081 &&\Rightarrow&& \abs{\upsilon-\chi}\leqslant 0.090.
    \label{eq:s-cal-3}
\end{align}

We observe that the contexts that are algebraically insensitive to the calibration already deviate from their ideal values by an amount comparable to the sensitive ones; therefore, the residual imperfection recorded in \autoref{tab:rc} and \autoref{tab:comm} is dominated by the interference visibility of the optics. Moreover, the two results in Eqs\,\eqref{eq:s-cal-2} and \eqref{eq:s-cal-3} reasonably agree with each other despite they are obtained from two different pieces of apparatus. The closeness of these numbers indicates that the calibration method is precise and repeatable.

\subsection{Consequence for the observed violation}

The most useful feature of Eq.\,\eqref{eq:s-Lchi} is structural rather than numerical. Since ${\cal L}(\chi,\upsilon)=4+2\cos\chi\cos\upsilon$ attains its maximum at $\chi=\upsilon=0$, any miscalibration of the displacement amplitudes can only \emph{decrease} $\cal L$. Imperfect calibration is thus a conservative error in this experiment: it may mask a quantum violation, but it cannot manufacture one. This conclusion requires no estimate of $\chi$ and $\upsilon$ at all.

The bounds of the preceding subsection quantify what is at stake. Using the limits obtained for the setup in which $\cal L$ is actually measured,
\begin{align}
    {\cal L} \geqslant 4+\cos(0.090)+\cos(0.135) = 5.9869,
\end{align}
so that imperfect calibration of $q_0p_0$ can depress $\cal L$ by at most $0.013$. The margin by which the measured value exceeds the noncontextual bound is ${\cal L}-3\sqrt3=0.744$, whence miscalibration of the displacement operations accounts for less than $2\%$ of the observed violation.

\section{Towards bounding the effect of incompatibility}

In this section, we develop a modified hidden-variable inequality that holds even when the compatibility between the observables is not perfect. Our main assumption is that when two observables are measured sequentially, the first measurement will disturb the hidden-variable distribution of the second observable to a finite extent, depending on how far away these observables are from being commuting. 
We show that, although such a methodology is possible, the acquired correction from imperfect compatibility is too large for the current experimental setup to preserve any quantum violation.

Suppose we have two unitary operators ${\cal O}^A, {\cal O}^B,\; [{\cal O}^A, {\cal O}^B] \neq 0$. According to Born's rule, the quantum expectation of operator ${\cal O}^B$ on a specific state $\rho$ is $\tr[\rho{\cal O}^B]$. On the other hand, after the state is affected by ${\cal O}^A$, the quantum expectation of operator ${\cal O}^B$ on the post-operation state, ${\cal O}^A\rho{\cal O}^{A\dagger}$, is: $\tr[{\cal O}^A\rho{\cal O}^{A\dagger}{\cal O}^B]$. The change of the expectation value is therefore:
\begin{align}
    \Delta = \tr[\rho{\cal O}^B] - \tr[{\cal O}^A\rho{\cal O}^{A\dagger}{\cal O}^B] = \tr(\rho {\cal O}^{A\dagger}[{\cal O}^A, {\cal O}^B]).
    \label{eq:s-disturb}
\end{align}

Let $\rho^\prime=\rho{\cal O}^{A\dagger}$ and $\ket{e_i}$ be the basis that $\rho^\prime$ is diagonalized: $\rho^\prime = \sum_i^N \lambda_i \ket{e_i}\bra{e_i}$, where $N$ is the total number of terms of expansion. The commutator can then be expanded as: $[{\cal O}^A, {\cal O}^B] = \sum_{ij} c_{ij}\ket{e_i}\bra{e_j}$. The diagonal terms $c_{ii}$ are sufficient for evaluating the trace in Eq.\,\eqref{eq:s-disturb}. Now, suppose for state $\rho^\prime$, we have obtained $\kappa_{\rho^\prime}({\cal O}^A, {\cal O}^B)$ from the commutativity test. From \autoref{sec:s-commtest}, we know that $\kappa_{\rho^\prime}({\cal O}^A, {\cal O}^B) = {\tr(\rho^\prime |[{\cal O}^A,{\cal O}^B]|^2)}/{4}$. We then proceed to find the relation between $\Delta$ and $\kappa_{\rho^\prime}({\cal O}^A, {\cal O}^B)$. To this purpose, we expand $\kappa_{\rho^\prime}({\cal O}^A, {\cal O}^B)$ on the $e$-basis as:

\begin{align}
    4\kappa_{\rho^\prime}({\cal O}^A, {\cal O}^B) &= \tr\left(\sum_k \lambda_i c_{ik}c_{kj} \ket{e_i}\bra{e_j}\right) = \sum_{i} \lambda_i c_{ij}c^*_{ji} \geq \sum_i \lambda_i c^2_{ii}.
\end{align}
Using Cauchy--Schwarz inequality and noticing that $\sum_i^N = 1$, we find that:
\begin{align}
    \Delta = \sum_i^N\lambda_i c_{ii} \leq \sqrt{\sum_i^N\lambda_i c^2_{ii}} \leq \sqrt{4\kappa_{\rho^\prime}({\cal O}^A, {\cal O}^B)}.
    \label{eq:s-relax}
\end{align}

Eq.\,\eqref{eq:s-relax} shows that, at least in principle, the effect of imperfect commutativity can be bounded by the results of the commutativity test. Moreover, the rank of the eigenbasis set $\ket{e}$, which can be unbounded for a CV system, ceases to appear in the final result. This approach could therefore be suitable for CV systems, and we can further follow the approach of References\,\cite{Guehne10} to derive testable inequalities.

Unfortunately, two challenges remain that prevent us from applying the results to the current experimental data. Firstly, the input state for the commutativity test needs to be $\rho^\prime=\rho{\cal O}^{A\dagger}$. Although ${\cal O}^{A\dagger}$ is a Gaussian unitary, it is not guaranteed that $\rho^\prime$ exists from any Gaussian, or even physical operations, from $\rho$. Secondly and more importantly, even if we ignore the difference between $\rho^\prime$ and $\rho$, the scaling of $\Delta$ is still not favorable for an experimental test. Note that $\Delta$ and $\kappa({\cal O}^A, {\cal O}^B)$ is linked by a square root relation, so even a reasonably small value of $\kappa({\cal O}^A, {\cal O}^B)$ can result in a fair amount of disturbance of the observables. This is not a unique feature for CV system: taking a pair of qubit operators as an example, if we choose ${\cal O}^B = \sigma^3$ and ${\cal O}^A = \sqrt{0.99}\,\sigma^3 + 0.1\sigma^1$, then $\kappa({\cal O}^A, {\cal O}^B) = 0.02$ which is sufficiently small and indicates decent commutativity. However, for an eigenstate of $\sigma^1$, the expectation value of ${\cal O}^B$ will change by almost 0.2, if ${\cal O}^A$ is applied beforehand. 

A disturbance value of $\kappa({\cal O}^A, {\cal O}^B) = 0.02$ is of similar order to what we observed in the experiment, and at this stage, we are limited by the typical interference visibility of the optical setup and have little space to improve. Considering the number of terms present in our inequality, it would be very hard to use the current approach to obtain and violate a modified noncontextual hidden-variable inequality. Therefore, we believe it will be of strong interest to design an improved version of the current experiment, where the outcomes of individual measurement results are encoded in different ancilla qubits.


\begin{thebibliography}{57}%
\makeatletter
\providecommand \@ifxundefined [1]{%
 \@ifx{#1\undefined}
}%
\providecommand \@ifnum [1]{%
 \ifnum #1\expandafter \@firstoftwo
 \else \expandafter \@secondoftwo
 \fi
}%
\providecommand \@ifx [1]{%
 \ifx #1\expandafter \@firstoftwo
 \else \expandafter \@secondoftwo
 \fi
}%
\providecommand \natexlab [1]{#1}%
\providecommand \enquote  [1]{``#1''}%
\providecommand \bibnamefont  [1]{#1}%
\providecommand \bibfnamefont [1]{#1}%
\providecommand \citenamefont [1]{#1}%
\providecommand \href@noop [0]{\@secondoftwo}%
\providecommand \href [0]{\begingroup \@sanitize@url \@href}%
\providecommand \@href[1]{\@@startlink{#1}\@@href}%
\providecommand \@@href[1]{\endgroup#1\@@endlink}%
\providecommand \@sanitize@url [0]{\catcode `\\12\catcode `\$12\catcode
  `\&12\catcode `\#12\catcode `\^12\catcode `\_12\catcode `\%12\relax}%
\providecommand \@@startlink[1]{}%
\providecommand \@@endlink[0]{}%
\providecommand \url  [0]{\begingroup\@sanitize@url \@url }%
\providecommand \@url [1]{\endgroup\@href {#1}{\urlprefix }}%
\providecommand \urlprefix  [0]{URL }%
\providecommand \Eprint [0]{\href }%
\providecommand \doibase [0]{https://doi.org/}%
\providecommand \selectlanguage [0]{\@gobble}%
\providecommand \bibinfo  [0]{\@secondoftwo}%
\providecommand \bibfield  [0]{\@secondoftwo}%
\providecommand \translation [1]{[#1]}%
\providecommand \BibitemOpen [0]{}%
\providecommand \bibitemStop [0]{}%
\providecommand \bibitemNoStop [0]{.\EOS\space}%
\providecommand \EOS [0]{\spacefactor3000\relax}%
\providecommand \BibitemShut  [1]{\csname bibitem#1\endcsname}%
\let\auto@bib@innerbib\@empty
\bibitem [{\citenamefont {Bell}(1966)}]{Bell66}%
  \BibitemOpen
  \bibfield  {author} {\bibinfo {author} {\bibfnamefont {J.~S.}\ \bibnamefont
  {Bell}},\ }\href {https://doi.org/10.1103/RevModPhys.38.447} {\bibfield
  {journal} {\bibinfo  {journal} {Rev. Mod. Phys.}\ }\textbf {\bibinfo {volume}
  {38}},\ \bibinfo {pages} {447} (\bibinfo {year} {1966})}\BibitemShut
  {NoStop}%
\bibitem [{\citenamefont {Clauser}\ \emph {et~al.}(1969)\citenamefont
  {Clauser}, \citenamefont {Horne}, \citenamefont {Shimony},\ and\
  \citenamefont {Holt}}]{clauser1969proposed}%
  \BibitemOpen
  \bibfield  {author} {\bibinfo {author} {\bibfnamefont {J.~F.}\ \bibnamefont
  {Clauser}}, \bibinfo {author} {\bibfnamefont {M.~A.}\ \bibnamefont {Horne}},
  \bibinfo {author} {\bibfnamefont {A.}~\bibnamefont {Shimony}},\ and\ \bibinfo
  {author} {\bibfnamefont {R.~A.}\ \bibnamefont {Holt}},\ }\href
  {https://doi.org/10.1103/PhysRevLett.23.880} {\bibfield  {journal} {\bibinfo
  {journal} {Phys. Rev. Lett.}\ }\textbf {\bibinfo {volume} {23}},\ \bibinfo
  {pages} {880} (\bibinfo {year} {1969})}\BibitemShut {NoStop}%
\bibitem [{\citenamefont {Freedman}\ and\ \citenamefont
  {Clauser}(1972)}]{freedman1972experimental}%
  \BibitemOpen
  \bibfield  {author} {\bibinfo {author} {\bibfnamefont {S.~J.}\ \bibnamefont
  {Freedman}}\ and\ \bibinfo {author} {\bibfnamefont {J.~F.}\ \bibnamefont
  {Clauser}},\ }\href {https://doi.org/10.1103/PhysRevLett.28.938} {\bibfield
  {journal} {\bibinfo  {journal} {Phys. Rev. Lett.}\ }\textbf {\bibinfo
  {volume} {28}},\ \bibinfo {pages} {938} (\bibinfo {year} {1972})}\BibitemShut
  {NoStop}%
\bibitem [{\citenamefont {Simon~Kochen}(1967)}]{KS67}%
  \BibitemOpen
  \bibfield  {author} {\bibinfo {author} {\bibfnamefont {E.~S.}\ \bibnamefont
  {Simon~Kochen}},\ }\href {https://www.jstor.org/stable/24902153} {\bibfield
  {journal} {\bibinfo  {journal} {J. Math. Mech.}\ }\textbf {\bibinfo {volume}
  {17}},\ \bibinfo {pages} {59} (\bibinfo {year} {1967})}\BibitemShut {NoStop}%
\bibitem [{\citenamefont {Budroni}\ \emph {et~al.}(2022)\citenamefont
  {Budroni}, \citenamefont {Cabello}, \citenamefont {G{\"u}hne}, \citenamefont
  {Kleinmann},\ and\ \citenamefont {Larsson}}]{budroni2022kochen}%
  \BibitemOpen
  \bibfield  {author} {\bibinfo {author} {\bibfnamefont {C.}~\bibnamefont
  {Budroni}}, \bibinfo {author} {\bibfnamefont {A.}~\bibnamefont {Cabello}},
  \bibinfo {author} {\bibfnamefont {O.}~\bibnamefont {G{\"u}hne}}, \bibinfo
  {author} {\bibfnamefont {M.}~\bibnamefont {Kleinmann}},\ and\ \bibinfo
  {author} {\bibfnamefont {J.-{\AA}.}\ \bibnamefont {Larsson}},\ }\href
  {https://doi.org/10.1103/RevModPhys.94.045007} {\bibfield  {journal}
  {\bibinfo  {journal} {Rev. Mod. Phys.}\ }\textbf {\bibinfo {volume} {94}},\
  \bibinfo {pages} {045007} (\bibinfo {year} {2022})}\BibitemShut {NoStop}%
\bibitem [{not()}]{note1}%
  \BibitemOpen
  \href@noop {} {}\bibinfo {note} {Throughout this paper, we omit the qualifier
  ``Kochen--Specker'' or ``quantum'' before ``contextuality'' for the sake of
  brevity.}\BibitemShut {Stop}%
\bibitem [{\citenamefont {Howard}\ \emph {et~al.}(2014)\citenamefont {Howard},
  \citenamefont {Wallman}, \citenamefont {Veitch},\ and\ \citenamefont
  {Emerson}}]{howard2014contextuality}%
  \BibitemOpen
  \bibfield  {author} {\bibinfo {author} {\bibfnamefont {M.}~\bibnamefont
  {Howard}}, \bibinfo {author} {\bibfnamefont {J.}~\bibnamefont {Wallman}},
  \bibinfo {author} {\bibfnamefont {V.}~\bibnamefont {Veitch}},\ and\ \bibinfo
  {author} {\bibfnamefont {J.}~\bibnamefont {Emerson}},\ }\href
  {https://doi.org/10.1038/nature13460} {\bibfield  {journal} {\bibinfo
  {journal} {Nature}\ }\textbf {\bibinfo {volume} {510}},\ \bibinfo {pages}
  {351} (\bibinfo {year} {2014})}\BibitemShut {NoStop}%
\bibitem [{\citenamefont {Bermejo-Vega}\ \emph {et~al.}(2017)\citenamefont
  {Bermejo-Vega}, \citenamefont {Delfosse}, \citenamefont {Browne},
  \citenamefont {Okay},\ and\ \citenamefont
  {Raussendorf}}]{bermejo2017contextuality}%
  \BibitemOpen
  \bibfield  {author} {\bibinfo {author} {\bibfnamefont {J.}~\bibnamefont
  {Bermejo-Vega}}, \bibinfo {author} {\bibfnamefont {N.}~\bibnamefont
  {Delfosse}}, \bibinfo {author} {\bibfnamefont {D.~E.}\ \bibnamefont
  {Browne}}, \bibinfo {author} {\bibfnamefont {C.}~\bibnamefont {Okay}},\ and\
  \bibinfo {author} {\bibfnamefont {R.}~\bibnamefont {Raussendorf}},\ }\href
  {https://doi.org/10.1103/PhysRevLett.119.120505} {\bibfield  {journal}
  {\bibinfo  {journal} {Phys. Rev. Lett.}\ }\textbf {\bibinfo {volume} {119}},\
  \bibinfo {pages} {120505} (\bibinfo {year} {2017})}\BibitemShut {NoStop}%
\bibitem [{\citenamefont {Booth}\ \emph {et~al.}(2022)\citenamefont {Booth},
  \citenamefont {Chabaud},\ and\ \citenamefont {Emeriau}}]{booth22}%
  \BibitemOpen
  \bibfield  {author} {\bibinfo {author} {\bibfnamefont {R.~I.}\ \bibnamefont
  {Booth}}, \bibinfo {author} {\bibfnamefont {U.}~\bibnamefont {Chabaud}},\
  and\ \bibinfo {author} {\bibfnamefont {P.-E.}\ \bibnamefont {Emeriau}},\
  }\href {https://doi.org/10.1103/PhysRevLett.129.230401} {\bibfield  {journal}
  {\bibinfo  {journal} {Phys. Rev. Lett.}\ }\textbf {\bibinfo {volume} {129}},\
  \bibinfo {pages} {230401} (\bibinfo {year} {2022})}\BibitemShut {NoStop}%
\bibitem [{\citenamefont {Ji}\ \emph {et~al.}(2021)\citenamefont {Ji},
  \citenamefont {Natarajan}, \citenamefont {Vidick}, \citenamefont {Wright},\
  and\ \citenamefont {Yuen}}]{Ji20}%
  \BibitemOpen
  \bibfield  {author} {\bibinfo {author} {\bibfnamefont {Z.}~\bibnamefont
  {Ji}}, \bibinfo {author} {\bibfnamefont {A.}~\bibnamefont {Natarajan}},
  \bibinfo {author} {\bibfnamefont {T.}~\bibnamefont {Vidick}}, \bibinfo
  {author} {\bibfnamefont {J.}~\bibnamefont {Wright}},\ and\ \bibinfo {author}
  {\bibfnamefont {H.}~\bibnamefont {Yuen}},\ }\href
  {https://doi.org/10.1145/3485628} {\bibfield  {journal} {\bibinfo  {journal}
  {Commun. ACM}\ }\textbf {\bibinfo {volume} {64}},\ \bibinfo {pages} {131}
  (\bibinfo {year} {2021})}\BibitemShut {NoStop}%
\bibitem [{\citenamefont {Cabello}\ \emph {et~al.}(2025)\citenamefont
  {Cabello}, \citenamefont {Quintino},\ and\ \citenamefont
  {Kleinmann}}]{Cabello23}%
  \BibitemOpen
  \bibfield  {author} {\bibinfo {author} {\bibfnamefont {A.}~\bibnamefont
  {Cabello}}, \bibinfo {author} {\bibfnamefont {M.~T.}\ \bibnamefont
  {Quintino}},\ and\ \bibinfo {author} {\bibfnamefont {M.}~\bibnamefont
  {Kleinmann}},\ }\href {https://doi.org/10.48550/arXiv.2307.02920} {\bibinfo
  {title} {Possible consequences for physics of the negative resolution of
  tsirelson's problem}} (\bibinfo {year} {2025}),\ \Eprint
  {https://arxiv.org/abs/2307.02920} {arXiv:2307.02920 [quant-ph]} \BibitemShut
  {NoStop}%
\bibitem [{\citenamefont {Zhang}\ \emph {et~al.}(2022)\citenamefont {Zhang},
  \citenamefont {van Leent}, \citenamefont {Redeker}, \citenamefont {Garthoff},
  \citenamefont {Schwonnek}, \citenamefont {Fertig}, \citenamefont {Eppelt},
  \citenamefont {Rosenfeld}, \citenamefont {Scarani}, \citenamefont {Lim} \emph
  {et~al.}}]{zhang2022device}%
  \BibitemOpen
  \bibfield  {author} {\bibinfo {author} {\bibfnamefont {W.}~\bibnamefont
  {Zhang}}, \bibinfo {author} {\bibfnamefont {T.}~\bibnamefont {van Leent}},
  \bibinfo {author} {\bibfnamefont {K.}~\bibnamefont {Redeker}}, \bibinfo
  {author} {\bibfnamefont {R.}~\bibnamefont {Garthoff}}, \bibinfo {author}
  {\bibfnamefont {R.}~\bibnamefont {Schwonnek}}, \bibinfo {author}
  {\bibfnamefont {F.}~\bibnamefont {Fertig}}, \bibinfo {author} {\bibfnamefont
  {S.}~\bibnamefont {Eppelt}}, \bibinfo {author} {\bibfnamefont
  {W.}~\bibnamefont {Rosenfeld}}, \bibinfo {author} {\bibfnamefont
  {V.}~\bibnamefont {Scarani}}, \bibinfo {author} {\bibfnamefont {C.~C.-W.}\
  \bibnamefont {Lim}}, \emph {et~al.},\ }\href
  {https://doi.org/10.1038/s41586-022-04891-y} {\bibfield  {journal} {\bibinfo
  {journal} {Nature}\ }\textbf {\bibinfo {volume} {607}},\ \bibinfo {pages}
  {687} (\bibinfo {year} {2022})}\BibitemShut {NoStop}%
\bibitem [{\citenamefont {Pironio}\ \emph {et~al.}(2010)\citenamefont
  {Pironio}, \citenamefont {Ac{\'\i}n}, \citenamefont {Massar}, \citenamefont
  {de~La~Giroday}, \citenamefont {Matsukevich}, \citenamefont {Maunz},
  \citenamefont {Olmschenk}, \citenamefont {Hayes}, \citenamefont {Luo},
  \citenamefont {Manning} \emph {et~al.}}]{pironio2010random}%
  \BibitemOpen
  \bibfield  {author} {\bibinfo {author} {\bibfnamefont {S.}~\bibnamefont
  {Pironio}}, \bibinfo {author} {\bibfnamefont {A.}~\bibnamefont {Ac{\'\i}n}},
  \bibinfo {author} {\bibfnamefont {S.}~\bibnamefont {Massar}}, \bibinfo
  {author} {\bibfnamefont {A.~B.}\ \bibnamefont {de~La~Giroday}}, \bibinfo
  {author} {\bibfnamefont {D.~N.}\ \bibnamefont {Matsukevich}}, \bibinfo
  {author} {\bibfnamefont {P.}~\bibnamefont {Maunz}}, \bibinfo {author}
  {\bibfnamefont {S.}~\bibnamefont {Olmschenk}}, \bibinfo {author}
  {\bibfnamefont {D.}~\bibnamefont {Hayes}}, \bibinfo {author} {\bibfnamefont
  {L.}~\bibnamefont {Luo}}, \bibinfo {author} {\bibfnamefont {T.~A.}\
  \bibnamefont {Manning}}, \emph {et~al.},\ }\href
  {https://doi.org/10.1038/nature09008} {\bibfield  {journal} {\bibinfo
  {journal} {Nature}\ }\textbf {\bibinfo {volume} {464}},\ \bibinfo {pages}
  {1021} (\bibinfo {year} {2010})}\BibitemShut {NoStop}%
\bibitem [{\citenamefont {Christensen}\ \emph {et~al.}(2013)\citenamefont
  {Christensen}, \citenamefont {McCusker}, \citenamefont {Altepeter},
  \citenamefont {Calkins}, \citenamefont {Gerrits}, \citenamefont {Lita},
  \citenamefont {Miller}, \citenamefont {Shalm}, \citenamefont {Zhang},
  \citenamefont {Nam} \emph {et~al.}}]{christensen2013detection}%
  \BibitemOpen
  \bibfield  {author} {\bibinfo {author} {\bibfnamefont {B.~G.}\ \bibnamefont
  {Christensen}}, \bibinfo {author} {\bibfnamefont {K.~T.}\ \bibnamefont
  {McCusker}}, \bibinfo {author} {\bibfnamefont {J.~B.}\ \bibnamefont
  {Altepeter}}, \bibinfo {author} {\bibfnamefont {B.}~\bibnamefont {Calkins}},
  \bibinfo {author} {\bibfnamefont {T.}~\bibnamefont {Gerrits}}, \bibinfo
  {author} {\bibfnamefont {A.~E.}\ \bibnamefont {Lita}}, \bibinfo {author}
  {\bibfnamefont {A.}~\bibnamefont {Miller}}, \bibinfo {author} {\bibfnamefont
  {L.~K.}\ \bibnamefont {Shalm}}, \bibinfo {author} {\bibfnamefont
  {Y.}~\bibnamefont {Zhang}}, \bibinfo {author} {\bibfnamefont {S.~W.}\
  \bibnamefont {Nam}}, \emph {et~al.},\ }\href
  {https://doi.org/10.1103/PhysRevLett.111.130406} {\bibfield  {journal}
  {\bibinfo  {journal} {Phys. Rev. Lett.}\ }\textbf {\bibinfo {volume} {111}},\
  \bibinfo {pages} {130406} (\bibinfo {year} {2013})}\BibitemShut {NoStop}%
\bibitem [{\citenamefont {Bell}(2004)}]{Bell87}%
  \BibitemOpen
  \bibfield  {author} {\bibinfo {author} {\bibfnamefont {J.~S.}\ \bibnamefont
  {Bell}},\ }\href@noop {} {\emph {\bibinfo {title} {Speakable and unspeakable
  in quantum mechanics: Collected papers on quantum philosophy}}}\ (\bibinfo
  {publisher} {Cambridge university press},\ \bibinfo {year}
  {2004})\BibitemShut {NoStop}%
\bibitem [{\citenamefont {Banaszek}\ and\ \citenamefont
  {W{\'o}dkiewicz}(1998)}]{Banaszek98}%
  \BibitemOpen
  \bibfield  {author} {\bibinfo {author} {\bibfnamefont {K.}~\bibnamefont
  {Banaszek}}\ and\ \bibinfo {author} {\bibfnamefont {K.}~\bibnamefont
  {W{\'o}dkiewicz}},\ }\href {https://doi.org/10.1103/PhysRevA.58.4345}
  {\bibfield  {journal} {\bibinfo  {journal} {Phys. Rev. A}\ }\textbf {\bibinfo
  {volume} {58}},\ \bibinfo {pages} {4345} (\bibinfo {year}
  {1998})}\BibitemShut {NoStop}%
\bibitem [{\citenamefont {McKeown}\ \emph {et~al.}(2011)\citenamefont
  {McKeown}, \citenamefont {Paris},\ and\ \citenamefont
  {Paternostro}}]{McKeown11}%
  \BibitemOpen
  \bibfield  {author} {\bibinfo {author} {\bibfnamefont {G.}~\bibnamefont
  {McKeown}}, \bibinfo {author} {\bibfnamefont {M.~G.}\ \bibnamefont {Paris}},\
  and\ \bibinfo {author} {\bibfnamefont {M.}~\bibnamefont {Paternostro}},\
  }\href {https://doi.org/10.1103/PhysRevA.83.062105} {\bibfield  {journal}
  {\bibinfo  {journal} {Phys. Rev. A}\ }\textbf {\bibinfo {volume} {83}},\
  \bibinfo {pages} {062105} (\bibinfo {year} {2011})}\BibitemShut {NoStop}%
\bibitem [{\citenamefont {Vlastakis}\ \emph {et~al.}(2015)\citenamefont
  {Vlastakis}, \citenamefont {Petrenko}, \citenamefont {Ofek}, \citenamefont
  {Sun}, \citenamefont {Leghtas}, \citenamefont {Sliwa}, \citenamefont {Liu},
  \citenamefont {Hatridge}, \citenamefont {Blumoff}, \citenamefont {Frunzio},
  \citenamefont {Mirrahimi}, \citenamefont {Jiang}, \citenamefont {Devoret},\
  and\ \citenamefont {Schoelkopf}}]{Vlastakis15}%
  \BibitemOpen
  \bibfield  {author} {\bibinfo {author} {\bibfnamefont {B.}~\bibnamefont
  {Vlastakis}}, \bibinfo {author} {\bibfnamefont {A.}~\bibnamefont {Petrenko}},
  \bibinfo {author} {\bibfnamefont {N.}~\bibnamefont {Ofek}}, \bibinfo {author}
  {\bibfnamefont {L.}~\bibnamefont {Sun}}, \bibinfo {author} {\bibfnamefont
  {Z.}~\bibnamefont {Leghtas}}, \bibinfo {author} {\bibfnamefont
  {K.}~\bibnamefont {Sliwa}}, \bibinfo {author} {\bibfnamefont
  {Y.}~\bibnamefont {Liu}}, \bibinfo {author} {\bibfnamefont {M.}~\bibnamefont
  {Hatridge}}, \bibinfo {author} {\bibfnamefont {J.}~\bibnamefont {Blumoff}},
  \bibinfo {author} {\bibfnamefont {L.}~\bibnamefont {Frunzio}}, \bibinfo
  {author} {\bibfnamefont {M.}~\bibnamefont {Mirrahimi}}, \bibinfo {author}
  {\bibfnamefont {L.}~\bibnamefont {Jiang}}, \bibinfo {author} {\bibfnamefont
  {M.~H.}\ \bibnamefont {Devoret}},\ and\ \bibinfo {author} {\bibfnamefont
  {R.~J.}\ \bibnamefont {Schoelkopf}},\ }\href
  {https://doi.org/10.1038/ncomms9970} {\bibfield  {journal} {\bibinfo
  {journal} {Nat. Commun.}\ }\textbf {\bibinfo {volume} {6}},\ \bibinfo {pages}
  {8970} (\bibinfo {year} {2015})}\BibitemShut {NoStop}%
\bibitem [{\citenamefont {Brask}\ and\ \citenamefont
  {Chaves}(2012)}]{brask2012robust}%
  \BibitemOpen
  \bibfield  {author} {\bibinfo {author} {\bibfnamefont {J.~B.}\ \bibnamefont
  {Brask}}\ and\ \bibinfo {author} {\bibfnamefont {R.}~\bibnamefont {Chaves}},\
  }\href {https://doi.org/10.1103/PhysRevA.86.010103} {\bibfield  {journal}
  {\bibinfo  {journal} {Phys. Rev. A}\ }\textbf {\bibinfo {volume} {86}},\
  \bibinfo {pages} {010103} (\bibinfo {year} {2012})}\BibitemShut {NoStop}%
\bibitem [{\citenamefont {Bjerrum}\ \emph {et~al.}(2023)\citenamefont
  {Bjerrum}, \citenamefont {Brask}, \citenamefont {Neergaard-Nielsen},\ and\
  \citenamefont {Andersen}}]{bjerrum2023proposal}%
  \BibitemOpen
  \bibfield  {author} {\bibinfo {author} {\bibfnamefont {A.~J.}\ \bibnamefont
  {Bjerrum}}, \bibinfo {author} {\bibfnamefont {J.~B.}\ \bibnamefont {Brask}},
  \bibinfo {author} {\bibfnamefont {J.~S.}\ \bibnamefont {Neergaard-Nielsen}},\
  and\ \bibinfo {author} {\bibfnamefont {U.~L.}\ \bibnamefont {Andersen}},\
  }\href {https://doi.org/10.1103/PhysRevA.107.052611} {\bibfield  {journal}
  {\bibinfo  {journal} {Phys. Rev. A}\ }\textbf {\bibinfo {volume} {107}},\
  \bibinfo {pages} {052611} (\bibinfo {year} {2023})}\BibitemShut {NoStop}%
\bibitem [{\citenamefont {Ishihara}\ \emph {et~al.}(2025)\citenamefont
  {Ishihara}, \citenamefont {Brendan}, \citenamefont {Roga}, \citenamefont
  {Andersen},\ and\ \citenamefont {Takeoka}}]{ishihara2025long}%
  \BibitemOpen
  \bibfield  {author} {\bibinfo {author} {\bibfnamefont {M.}~\bibnamefont
  {Ishihara}}, \bibinfo {author} {\bibfnamefont {A.}~\bibnamefont {Brendan}},
  \bibinfo {author} {\bibfnamefont {W.}~\bibnamefont {Roga}}, \bibinfo {author}
  {\bibfnamefont {U.~L.}\ \bibnamefont {Andersen}},\ and\ \bibinfo {author}
  {\bibfnamefont {M.}~\bibnamefont {Takeoka}},\ }\href
  {https://doi.org/10.1364/OPTICAQ.579526} {\bibfield  {journal} {\bibinfo
  {journal} {Optica Quantum}\ }\textbf {\bibinfo {volume} {3}},\ \bibinfo
  {pages} {535} (\bibinfo {year} {2025})}\BibitemShut {NoStop}%
\bibitem [{\citenamefont {D’Angelo}\ \emph {et~al.}(2006)\citenamefont
  {D’Angelo}, \citenamefont {Zavatta}, \citenamefont {Parigi},\ and\
  \citenamefont {Bellini}}]{dAngelo06}%
  \BibitemOpen
  \bibfield  {author} {\bibinfo {author} {\bibfnamefont {M.}~\bibnamefont
  {D’Angelo}}, \bibinfo {author} {\bibfnamefont {A.}~\bibnamefont {Zavatta}},
  \bibinfo {author} {\bibfnamefont {V.}~\bibnamefont {Parigi}},\ and\ \bibinfo
  {author} {\bibfnamefont {M.}~\bibnamefont {Bellini}},\ }\href
  {https://doi.org/10.1103/PhysRevA.74.052114} {\bibfield  {journal} {\bibinfo
  {journal} {Phys. Rev. A}\ }\textbf {\bibinfo {volume} {74}},\ \bibinfo
  {pages} {052114} (\bibinfo {year} {2006})}\BibitemShut {NoStop}%
\bibitem [{\citenamefont {Thearle}\ \emph {et~al.}(2018)\citenamefont
  {Thearle}, \citenamefont {Janousek}, \citenamefont {Armstrong}, \citenamefont
  {Hosseini}, \citenamefont {Sch{\"u}nemann}, \citenamefont {Assad},
  \citenamefont {Symul}, \citenamefont {James}, \citenamefont {Huntington},
  \citenamefont {Ralph} \emph {et~al.}}]{Thearle18}%
  \BibitemOpen
  \bibfield  {author} {\bibinfo {author} {\bibfnamefont {O.}~\bibnamefont
  {Thearle}}, \bibinfo {author} {\bibfnamefont {J.}~\bibnamefont {Janousek}},
  \bibinfo {author} {\bibfnamefont {S.}~\bibnamefont {Armstrong}}, \bibinfo
  {author} {\bibfnamefont {S.}~\bibnamefont {Hosseini}}, \bibinfo {author}
  {\bibfnamefont {M.}~\bibnamefont {Sch{\"u}nemann}}, \bibinfo {author}
  {\bibfnamefont {S.}~\bibnamefont {Assad}}, \bibinfo {author} {\bibfnamefont
  {T.}~\bibnamefont {Symul}}, \bibinfo {author} {\bibfnamefont {M.~R.}\
  \bibnamefont {James}}, \bibinfo {author} {\bibfnamefont {E.}~\bibnamefont
  {Huntington}}, \bibinfo {author} {\bibfnamefont {T.~C.}\ \bibnamefont
  {Ralph}}, \emph {et~al.},\ }\href
  {https://doi.org/10.1103/PhysRevLett.120.040406} {\bibfield  {journal}
  {\bibinfo  {journal} {Phys. Rev. Lett.}\ }\textbf {\bibinfo {volume} {120}},\
  \bibinfo {pages} {040406} (\bibinfo {year} {2018})}\BibitemShut {NoStop}%
\bibitem [{\citenamefont {Kuzmich}\ \emph {et~al.}(2000)\citenamefont
  {Kuzmich}, \citenamefont {Walmsley},\ and\ \citenamefont
  {Mandel}}]{Kuzmich00}%
  \BibitemOpen
  \bibfield  {author} {\bibinfo {author} {\bibfnamefont {A.}~\bibnamefont
  {Kuzmich}}, \bibinfo {author} {\bibfnamefont {I.}~\bibnamefont {Walmsley}},\
  and\ \bibinfo {author} {\bibfnamefont {L.}~\bibnamefont {Mandel}},\ }\href
  {https://doi.org/10.1103/PhysRevLett.85.1349} {\bibfield  {journal} {\bibinfo
   {journal} {Phys. Rev. Lett.}\ }\textbf {\bibinfo {volume} {85}},\ \bibinfo
  {pages} {1349} (\bibinfo {year} {2000})}\BibitemShut {NoStop}%
\bibitem [{\citenamefont {Babichev}\ \emph {et~al.}(2004)\citenamefont
  {Babichev}, \citenamefont {Appel},\ and\ \citenamefont
  {Lvovsky}}]{Babichev04}%
  \BibitemOpen
  \bibfield  {author} {\bibinfo {author} {\bibfnamefont {S.~A.}\ \bibnamefont
  {Babichev}}, \bibinfo {author} {\bibfnamefont {J.}~\bibnamefont {Appel}},\
  and\ \bibinfo {author} {\bibfnamefont {A.~I.}\ \bibnamefont {Lvovsky}},\
  }\href {https://doi.org/10.1103/PhysRevLett.92.193601} {\bibfield  {journal}
  {\bibinfo  {journal} {Phys. Rev. Lett.}\ }\textbf {\bibinfo {volume} {92}},\
  \bibinfo {pages} {193601} (\bibinfo {year} {2004})}\BibitemShut {NoStop}%
\bibitem [{\citenamefont {Uppu}\ \emph {et~al.}(2020)\citenamefont {Uppu},
  \citenamefont {Pedersen}, \citenamefont {Wang}, \citenamefont {Olesen},
  \citenamefont {Papon}, \citenamefont {Zhou}, \citenamefont {Midolo},
  \citenamefont {Scholz}, \citenamefont {Wieck}, \citenamefont {Ludwig} \emph
  {et~al.}}]{uppu2020scalable}%
  \BibitemOpen
  \bibfield  {author} {\bibinfo {author} {\bibfnamefont {R.}~\bibnamefont
  {Uppu}}, \bibinfo {author} {\bibfnamefont {F.~T.}\ \bibnamefont {Pedersen}},
  \bibinfo {author} {\bibfnamefont {Y.}~\bibnamefont {Wang}}, \bibinfo {author}
  {\bibfnamefont {C.~T.}\ \bibnamefont {Olesen}}, \bibinfo {author}
  {\bibfnamefont {C.}~\bibnamefont {Papon}}, \bibinfo {author} {\bibfnamefont
  {X.}~\bibnamefont {Zhou}}, \bibinfo {author} {\bibfnamefont {L.}~\bibnamefont
  {Midolo}}, \bibinfo {author} {\bibfnamefont {S.}~\bibnamefont {Scholz}},
  \bibinfo {author} {\bibfnamefont {A.~D.}\ \bibnamefont {Wieck}}, \bibinfo
  {author} {\bibfnamefont {A.}~\bibnamefont {Ludwig}}, \emph {et~al.},\ }\href
  {https://doi.org/10.1126/sciadv.abc8268} {\bibfield  {journal} {\bibinfo
  {journal} {Sci. Adv.}\ }\textbf {\bibinfo {volume} {6}},\ \bibinfo {pages}
  {eabc8268} (\bibinfo {year} {2020})}\BibitemShut {NoStop}%
\bibitem [{\citenamefont {Loredo}\ \emph {et~al.}(2026)\citenamefont {Loredo},
  \citenamefont {Stefan}, \citenamefont {Krogh}, \citenamefont {Jensen},
  \citenamefont {Suleiman}, \citenamefont {Kr{\"u}ger}, \citenamefont
  {Bergamin}, \citenamefont {Thyrrestrup}, \citenamefont {Budtz}, \citenamefont
  {Roulund} \emph {et~al.}}]{loredo2026deterministic}%
  \BibitemOpen
  \bibfield  {author} {\bibinfo {author} {\bibfnamefont {J.}~\bibnamefont
  {Loredo}}, \bibinfo {author} {\bibfnamefont {L.}~\bibnamefont {Stefan}},
  \bibinfo {author} {\bibfnamefont {B.}~\bibnamefont {Krogh}}, \bibinfo
  {author} {\bibfnamefont {R.}~\bibnamefont {Jensen}}, \bibinfo {author}
  {\bibfnamefont {I.}~\bibnamefont {Suleiman}}, \bibinfo {author}
  {\bibfnamefont {S.}~\bibnamefont {Kr{\"u}ger}}, \bibinfo {author}
  {\bibfnamefont {M.}~\bibnamefont {Bergamin}}, \bibinfo {author}
  {\bibfnamefont {H.}~\bibnamefont {Thyrrestrup}}, \bibinfo {author}
  {\bibfnamefont {S.}~\bibnamefont {Budtz}}, \bibinfo {author} {\bibfnamefont
  {J.}~\bibnamefont {Roulund}}, \emph {et~al.},\ }\bibfield  {journal}
  {\bibinfo  {journal} {Appl. Phys. Rev.}\ }\textbf {\bibinfo {volume} {13}},\
  \href {https://doi.org/10.1063/5.0316046} {10.1063/5.0316046} (\bibinfo
  {year} {2026})\BibitemShut {NoStop}%
\bibitem [{\citenamefont {Mermin}(1990)}]{mermin1990simple}%
  \BibitemOpen
  \bibfield  {author} {\bibinfo {author} {\bibfnamefont {N.~D.}\ \bibnamefont
  {Mermin}},\ }\href {https://doi.org/10.1103/PhysRevLett.65.3373} {\bibfield
  {journal} {\bibinfo  {journal} {Phys. Rev. Lett.}\ }\textbf {\bibinfo
  {volume} {65}},\ \bibinfo {pages} {3373} (\bibinfo {year}
  {1990})}\BibitemShut {NoStop}%
\bibitem [{\citenamefont {Peres}(1990)}]{Peres90}%
  \BibitemOpen
  \bibfield  {author} {\bibinfo {author} {\bibfnamefont {A.}~\bibnamefont
  {Peres}},\ }\href {https://doi.org/10.1016/0375-9601(90)90172-K} {\bibfield
  {journal} {\bibinfo  {journal} {Phys. Lett. A}\ }\textbf {\bibinfo {volume}
  {151}},\ \bibinfo {pages} {107} (\bibinfo {year} {1990})}\BibitemShut
  {NoStop}%
\bibitem [{\citenamefont {Mermin}(1993)}]{Mermin93}%
  \BibitemOpen
  \bibfield  {author} {\bibinfo {author} {\bibfnamefont {N.~D.}\ \bibnamefont
  {Mermin}},\ }\href {https://doi.org/10.1103/RevModPhys.65.803} {\bibfield
  {journal} {\bibinfo  {journal} {Rev. Mod. Phys.}\ }\textbf {\bibinfo {volume}
  {65}},\ \bibinfo {pages} {803} (\bibinfo {year} {1993})}\BibitemShut
  {NoStop}%
\bibitem [{\citenamefont {Cabello}(2008)}]{Cabello08}%
  \BibitemOpen
  \bibfield  {author} {\bibinfo {author} {\bibfnamefont {A.}~\bibnamefont
  {Cabello}},\ }\href {https://doi.org/10.1103/PhysRevLett.101.210401}
  {\bibfield  {journal} {\bibinfo  {journal} {Phys. Rev. Lett.}\ }\textbf
  {\bibinfo {volume} {101}},\ \bibinfo {pages} {210401} (\bibinfo {year}
  {2008})}\BibitemShut {NoStop}%
\bibitem [{\citenamefont {Plastino}\ and\ \citenamefont
  {Cabello}(2010)}]{plastino2010state}%
  \BibitemOpen
  \bibfield  {author} {\bibinfo {author} {\bibfnamefont {A.~R.}\ \bibnamefont
  {Plastino}}\ and\ \bibinfo {author} {\bibfnamefont {A.}~\bibnamefont
  {Cabello}},\ }\href {https://doi.org/10.1103/PhysRevA.82.022114} {\bibfield
  {journal} {\bibinfo  {journal} {Phys. Rev. A}\ }\textbf {\bibinfo {volume}
  {82}},\ \bibinfo {pages} {022114} (\bibinfo {year} {2010})}\BibitemShut
  {NoStop}%
\bibitem [{\citenamefont {Asadian}\ \emph {et~al.}(2015)\citenamefont
  {Asadian}, \citenamefont {Budroni}, \citenamefont {Steinhoff}, \citenamefont
  {Rabl},\ and\ \citenamefont {G{\"u}hne}}]{Asadian2015}%
  \BibitemOpen
  \bibfield  {author} {\bibinfo {author} {\bibfnamefont {A.}~\bibnamefont
  {Asadian}}, \bibinfo {author} {\bibfnamefont {C.}~\bibnamefont {Budroni}},
  \bibinfo {author} {\bibfnamefont {F.~E.}\ \bibnamefont {Steinhoff}}, \bibinfo
  {author} {\bibfnamefont {P.}~\bibnamefont {Rabl}},\ and\ \bibinfo {author}
  {\bibfnamefont {O.}~\bibnamefont {G{\"u}hne}},\ }\href
  {https://doi.org/10.1103/PhysRevLett.114.250403} {\bibfield  {journal}
  {\bibinfo  {journal} {Phys. Rev. Lett.}\ }\textbf {\bibinfo {volume} {114}},\
  \bibinfo {pages} {250403} (\bibinfo {year} {2015})}\BibitemShut {NoStop}%
\bibitem [{\citenamefont {Ozawa}(1984)}]{ozawa1984quantum}%
  \BibitemOpen
  \bibfield  {author} {\bibinfo {author} {\bibfnamefont {M.}~\bibnamefont
  {Ozawa}},\ }\href {https://doi.org/10.1063/1.526000} {\bibfield  {journal}
  {\bibinfo  {journal} {J. Math. Phys.}\ }\textbf {\bibinfo {volume} {25}},\
  \bibinfo {pages} {79} (\bibinfo {year} {1984})}\BibitemShut {NoStop}%
\bibitem [{\citenamefont {Asadian}\ \emph {et~al.}(2014)\citenamefont
  {Asadian}, \citenamefont {Brukner},\ and\ \citenamefont {Rabl}}]{Asadian14}%
  \BibitemOpen
  \bibfield  {author} {\bibinfo {author} {\bibfnamefont {A.}~\bibnamefont
  {Asadian}}, \bibinfo {author} {\bibfnamefont {C.}~\bibnamefont {Brukner}},\
  and\ \bibinfo {author} {\bibfnamefont {P.}~\bibnamefont {Rabl}},\ }\href
  {https://doi.org/10.1103/PhysRevLett.112.190402} {\bibfield  {journal}
  {\bibinfo  {journal} {Phys. Rev. Lett.}\ }\textbf {\bibinfo {volume} {112}},\
  \bibinfo {pages} {190402} (\bibinfo {year} {2014})}\BibitemShut {NoStop}%
\bibitem [{\citenamefont {Fl{\"u}hmann}\ \emph {et~al.}(2018)\citenamefont
  {Fl{\"u}hmann}, \citenamefont {Negnevitsky}, \citenamefont {Marinelli},\ and\
  \citenamefont {Home}}]{Fluehmann18}%
  \BibitemOpen
  \bibfield  {author} {\bibinfo {author} {\bibfnamefont {C.}~\bibnamefont
  {Fl{\"u}hmann}}, \bibinfo {author} {\bibfnamefont {V.}~\bibnamefont
  {Negnevitsky}}, \bibinfo {author} {\bibfnamefont {M.}~\bibnamefont
  {Marinelli}},\ and\ \bibinfo {author} {\bibfnamefont {J.~P.}\ \bibnamefont
  {Home}},\ }\href {https://doi.org/10.1103/PhysRevX.8.021001} {\bibfield
  {journal} {\bibinfo  {journal} {Phys. Rev. X}\ }\textbf {\bibinfo {volume}
  {8}},\ \bibinfo {pages} {021001} (\bibinfo {year} {2018})}\BibitemShut
  {NoStop}%
\bibitem [{\citenamefont {Amselem}\ \emph {et~al.}(2009)\citenamefont
  {Amselem}, \citenamefont {R{\aa}dmark}, \citenamefont {Bourennane},\ and\
  \citenamefont {Cabello}}]{Amselem09}%
  \BibitemOpen
  \bibfield  {author} {\bibinfo {author} {\bibfnamefont {E.}~\bibnamefont
  {Amselem}}, \bibinfo {author} {\bibfnamefont {M.}~\bibnamefont
  {R{\aa}dmark}}, \bibinfo {author} {\bibfnamefont {M.}~\bibnamefont
  {Bourennane}},\ and\ \bibinfo {author} {\bibfnamefont {A.}~\bibnamefont
  {Cabello}},\ }\href {https://doi.org/10.1103/PhysRevLett.103.160405}
  {\bibfield  {journal} {\bibinfo  {journal} {Phys. Rev. Lett.}\ }\textbf
  {\bibinfo {volume} {103}},\ \bibinfo {pages} {160405} (\bibinfo {year}
  {2009})}\BibitemShut {NoStop}%
\bibitem [{\citenamefont {Qu}\ \emph {et~al.}(2021)\citenamefont {Qu},
  \citenamefont {Wang}, \citenamefont {Xiao}, \citenamefont {Zhan},\ and\
  \citenamefont {Xue}}]{dkqu21}%
  \BibitemOpen
  \bibfield  {author} {\bibinfo {author} {\bibfnamefont {D.}~\bibnamefont
  {Qu}}, \bibinfo {author} {\bibfnamefont {K.}~\bibnamefont {Wang}}, \bibinfo
  {author} {\bibfnamefont {L.}~\bibnamefont {Xiao}}, \bibinfo {author}
  {\bibfnamefont {X.}~\bibnamefont {Zhan}},\ and\ \bibinfo {author}
  {\bibfnamefont {P.}~\bibnamefont {Xue}},\ }\href
  {https://doi.org/10.1038/s41534-021-00492-1} {\bibfield  {journal} {\bibinfo
  {journal} {npj Quant. Inf.}\ }\textbf {\bibinfo {volume} {7}},\ \bibinfo
  {pages} {154} (\bibinfo {year} {2021})}\BibitemShut {NoStop}%
\bibitem [{\citenamefont {Zhang}\ \emph {et~al.}(2019)\citenamefont {Zhang},
  \citenamefont {Xu}, \citenamefont {Xie}, \citenamefont {Zhang}, \citenamefont
  {Smith}, \citenamefont {Kim},\ and\ \citenamefont
  {Zhang}}]{zhang2019experimental}%
  \BibitemOpen
  \bibfield  {author} {\bibinfo {author} {\bibfnamefont {A.}~\bibnamefont
  {Zhang}}, \bibinfo {author} {\bibfnamefont {H.}~\bibnamefont {Xu}}, \bibinfo
  {author} {\bibfnamefont {J.}~\bibnamefont {Xie}}, \bibinfo {author}
  {\bibfnamefont {H.}~\bibnamefont {Zhang}}, \bibinfo {author} {\bibfnamefont
  {B.~J.}\ \bibnamefont {Smith}}, \bibinfo {author} {\bibfnamefont {M.~S.}\
  \bibnamefont {Kim}},\ and\ \bibinfo {author} {\bibfnamefont {L.}~\bibnamefont
  {Zhang}},\ }\href {https://doi.org/10.1103/PhysRevLett.122.080401} {\bibfield
   {journal} {\bibinfo  {journal} {Phys. Rev. Lett.}\ }\textbf {\bibinfo
  {volume} {122}},\ \bibinfo {pages} {080401} (\bibinfo {year}
  {2019})}\BibitemShut {NoStop}%
\bibitem [{\citenamefont {Meng}\ \emph {et~al.}(2025)\citenamefont {Meng},
  \citenamefont {Saha}, \citenamefont {Mikkelsen}, \citenamefont {Henke},
  \citenamefont {Wang}, \citenamefont {Bart}, \citenamefont {Ludwig},
  \citenamefont {Lodahl}, \citenamefont {Cabello},\ and\ \citenamefont
  {Midolo}}]{meng2025contextuality}%
  \BibitemOpen
  \bibfield  {author} {\bibinfo {author} {\bibfnamefont {Y.}~\bibnamefont
  {Meng}}, \bibinfo {author} {\bibfnamefont {D.}~\bibnamefont {Saha}}, \bibinfo
  {author} {\bibfnamefont {M.~T.}\ \bibnamefont {Mikkelsen}}, \bibinfo {author}
  {\bibfnamefont {C.}~\bibnamefont {Henke}}, \bibinfo {author} {\bibfnamefont
  {Y.}~\bibnamefont {Wang}}, \bibinfo {author} {\bibfnamefont {N.}~\bibnamefont
  {Bart}}, \bibinfo {author} {\bibfnamefont {A.}~\bibnamefont {Ludwig}},
  \bibinfo {author} {\bibfnamefont {P.}~\bibnamefont {Lodahl}}, \bibinfo
  {author} {\bibfnamefont {A.}~\bibnamefont {Cabello}},\ and\ \bibinfo {author}
  {\bibfnamefont {L.}~\bibnamefont {Midolo}},\ }\href@noop {} {\bibfield
  {journal} {\bibinfo  {journal} {arXiv preprint arXiv:2510.12761}\ } (\bibinfo
  {year} {2025})}\BibitemShut {NoStop}%
\bibitem [{\citenamefont {Lodahl}\ \emph {et~al.}(2015)\citenamefont {Lodahl},
  \citenamefont {Mahmoodian},\ and\ \citenamefont
  {Stobbe}}]{lodahl2015interfacing}%
  \BibitemOpen
  \bibfield  {author} {\bibinfo {author} {\bibfnamefont {P.}~\bibnamefont
  {Lodahl}}, \bibinfo {author} {\bibfnamefont {S.}~\bibnamefont {Mahmoodian}},\
  and\ \bibinfo {author} {\bibfnamefont {S.}~\bibnamefont {Stobbe}},\ }\href
  {https://doi.org/10.1103/RevModPhys.87.347} {\bibfield  {journal} {\bibinfo
  {journal} {Rev. Mod. Phys.}\ }\textbf {\bibinfo {volume} {87}},\ \bibinfo
  {pages} {347} (\bibinfo {year} {2015})}\BibitemShut {NoStop}%
\bibitem [{SM()}]{SM}%
  \BibitemOpen
  \href@noop {} {}\bibinfo {note} {See Supplemental Material at
  \href{https://127.0.0.1}{\textsf{Link to Supplemental Material}} for proof of
  the propositions in the main text and experimental details, which also
  includes References\,\cite{Procopio15,Stromberg23,Rozema24}.}\BibitemShut
  {Stop}%
\bibitem [{\citenamefont {Procopio}\ \emph {et~al.}(2015)\citenamefont
  {Procopio}, \citenamefont {Moqanaki}, \citenamefont {Ara{\'u}jo},
  \citenamefont {Costa}, \citenamefont {Alonso~Calafell}, \citenamefont {Dowd},
  \citenamefont {Hamel}, \citenamefont {Rozema}, \citenamefont {Brukner},\ and\
  \citenamefont {Walther}}]{Procopio15}%
  \BibitemOpen
  \bibfield  {author} {\bibinfo {author} {\bibfnamefont {L.~M.}\ \bibnamefont
  {Procopio}}, \bibinfo {author} {\bibfnamefont {A.}~\bibnamefont {Moqanaki}},
  \bibinfo {author} {\bibfnamefont {M.}~\bibnamefont {Ara{\'u}jo}}, \bibinfo
  {author} {\bibfnamefont {F.}~\bibnamefont {Costa}}, \bibinfo {author}
  {\bibfnamefont {I.}~\bibnamefont {Alonso~Calafell}}, \bibinfo {author}
  {\bibfnamefont {E.~G.}\ \bibnamefont {Dowd}}, \bibinfo {author}
  {\bibfnamefont {D.~R.}\ \bibnamefont {Hamel}}, \bibinfo {author}
  {\bibfnamefont {L.~A.}\ \bibnamefont {Rozema}}, \bibinfo {author}
  {\bibfnamefont {{\v{C}}.}~\bibnamefont {Brukner}},\ and\ \bibinfo {author}
  {\bibfnamefont {P.}~\bibnamefont {Walther}},\ }\href
  {https://doi.org/10.1038/ncomms8913} {\bibfield  {journal} {\bibinfo
  {journal} {Nat. Commun.}\ }\textbf {\bibinfo {volume} {6}},\ \bibinfo {pages}
  {7913} (\bibinfo {year} {2015})}\BibitemShut {NoStop}%
\bibitem [{\citenamefont {Str{\"o}mberg}\ \emph {et~al.}(2023)\citenamefont
  {Str{\"o}mberg}, \citenamefont {Schiansky}, \citenamefont {Peterson},
  \citenamefont {Quintino},\ and\ \citenamefont {Walther}}]{Stromberg23}%
  \BibitemOpen
  \bibfield  {author} {\bibinfo {author} {\bibfnamefont {T.}~\bibnamefont
  {Str{\"o}mberg}}, \bibinfo {author} {\bibfnamefont {P.}~\bibnamefont
  {Schiansky}}, \bibinfo {author} {\bibfnamefont {R.~W.}\ \bibnamefont
  {Peterson}}, \bibinfo {author} {\bibfnamefont {M.~T.}\ \bibnamefont
  {Quintino}},\ and\ \bibinfo {author} {\bibfnamefont {P.}~\bibnamefont
  {Walther}},\ }\href {https://doi.org/10.1103/PhysRevLett.131.060803}
  {\bibfield  {journal} {\bibinfo  {journal} {Phys. Rev. Lett.}\ }\textbf
  {\bibinfo {volume} {131}},\ \bibinfo {pages} {060803} (\bibinfo {year}
  {2023})}\BibitemShut {NoStop}%
\bibitem [{\citenamefont {Rozema}\ \emph {et~al.}(2024)\citenamefont {Rozema},
  \citenamefont {Str{\"o}mberg}, \citenamefont {Cao}, \citenamefont {Guo},
  \citenamefont {Liu},\ and\ \citenamefont {Walther}}]{Rozema24}%
  \BibitemOpen
  \bibfield  {author} {\bibinfo {author} {\bibfnamefont {L.~A.}\ \bibnamefont
  {Rozema}}, \bibinfo {author} {\bibfnamefont {T.}~\bibnamefont
  {Str{\"o}mberg}}, \bibinfo {author} {\bibfnamefont {H.}~\bibnamefont {Cao}},
  \bibinfo {author} {\bibfnamefont {Y.}~\bibnamefont {Guo}}, \bibinfo {author}
  {\bibfnamefont {B.-H.}\ \bibnamefont {Liu}},\ and\ \bibinfo {author}
  {\bibfnamefont {P.}~\bibnamefont {Walther}},\ }\href
  {https://doi.org/10.1038/s42254-024-00739-8} {\bibfield  {journal} {\bibinfo
  {journal} {Nat. Rev. Phys.}\ }\textbf {\bibinfo {volume} {6}},\ \bibinfo
  {pages} {483} (\bibinfo {year} {2024})}\BibitemShut {NoStop}%
\bibitem [{\citenamefont {Gao}\ \emph {et~al.}(2023)\citenamefont {Gao},
  \citenamefont {Li}, \citenamefont {Mishra}, \citenamefont {Yan},
  \citenamefont {Simonov},\ and\ \citenamefont {Chiribella}}]{ngao23}%
  \BibitemOpen
  \bibfield  {author} {\bibinfo {author} {\bibfnamefont {N.}~\bibnamefont
  {Gao}}, \bibinfo {author} {\bibfnamefont {D.}~\bibnamefont {Li}}, \bibinfo
  {author} {\bibfnamefont {A.}~\bibnamefont {Mishra}}, \bibinfo {author}
  {\bibfnamefont {J.}~\bibnamefont {Yan}}, \bibinfo {author} {\bibfnamefont
  {K.}~\bibnamefont {Simonov}},\ and\ \bibinfo {author} {\bibfnamefont
  {G.}~\bibnamefont {Chiribella}},\ }\href
  {https://doi.org/10.1103/PhysRevLett.130.170201} {\bibfield  {journal}
  {\bibinfo  {journal} {Phys. Rev. Lett.}\ }\textbf {\bibinfo {volume} {130}},\
  \bibinfo {pages} {170201} (\bibinfo {year} {2023})}\BibitemShut {NoStop}%
\bibitem [{\citenamefont {G{\"u}hne}\ \emph {et~al.}(2010)\citenamefont
  {G{\"u}hne}, \citenamefont {Kleinmann}, \citenamefont {Cabello},
  \citenamefont {Larsson}, \citenamefont {Kirchmair}, \citenamefont
  {Z{\"a}hringer}, \citenamefont {Gerritsma},\ and\ \citenamefont
  {Roos}}]{Guehne10}%
  \BibitemOpen
  \bibfield  {author} {\bibinfo {author} {\bibfnamefont {O.}~\bibnamefont
  {G{\"u}hne}}, \bibinfo {author} {\bibfnamefont {M.}~\bibnamefont
  {Kleinmann}}, \bibinfo {author} {\bibfnamefont {A.}~\bibnamefont {Cabello}},
  \bibinfo {author} {\bibfnamefont {J.-{\AA}.}\ \bibnamefont {Larsson}},
  \bibinfo {author} {\bibfnamefont {G.}~\bibnamefont {Kirchmair}}, \bibinfo
  {author} {\bibfnamefont {F.}~\bibnamefont {Z{\"a}hringer}}, \bibinfo {author}
  {\bibfnamefont {R.}~\bibnamefont {Gerritsma}},\ and\ \bibinfo {author}
  {\bibfnamefont {C.~F.}\ \bibnamefont {Roos}},\ }\href
  {https://doi.org/10.1103/PhysRevA.81.022121} {\bibfield  {journal} {\bibinfo
  {journal} {Phys. Rev. A}\ }\textbf {\bibinfo {volume} {81}},\ \bibinfo
  {pages} {022121} (\bibinfo {year} {2010})}\BibitemShut {NoStop}%
\bibitem [{\citenamefont {Szangolies}\ \emph {et~al.}(2013)\citenamefont
  {Szangolies}, \citenamefont {Kleinmann},\ and\ \citenamefont
  {G{\"u}hne}}]{Szangolies13}%
  \BibitemOpen
  \bibfield  {author} {\bibinfo {author} {\bibfnamefont {J.}~\bibnamefont
  {Szangolies}}, \bibinfo {author} {\bibfnamefont {M.}~\bibnamefont
  {Kleinmann}},\ and\ \bibinfo {author} {\bibfnamefont {O.}~\bibnamefont
  {G{\"u}hne}},\ }\href {https://doi.org/10.1103/PhysRevA.87.050101} {\bibfield
   {journal} {\bibinfo  {journal} {Phys. Rev. A}\ }\textbf {\bibinfo {volume}
  {87}},\ \bibinfo {pages} {050101} (\bibinfo {year} {2013})}\BibitemShut
  {NoStop}%
\bibitem [{\citenamefont {Andersen}\ \emph {et~al.}(2015)\citenamefont
  {Andersen}, \citenamefont {Neergaard-Nielsen}, \citenamefont {Van~Loock},\
  and\ \citenamefont {Furusawa}}]{andersen2015hybrid}%
  \BibitemOpen
  \bibfield  {author} {\bibinfo {author} {\bibfnamefont {U.~L.}\ \bibnamefont
  {Andersen}}, \bibinfo {author} {\bibfnamefont {J.~S.}\ \bibnamefont
  {Neergaard-Nielsen}}, \bibinfo {author} {\bibfnamefont {P.}~\bibnamefont
  {Van~Loock}},\ and\ \bibinfo {author} {\bibfnamefont {A.}~\bibnamefont
  {Furusawa}},\ }\href {https://doi.org/10.1038/nphys3410} {\bibfield
  {journal} {\bibinfo  {journal} {Nat. Phys.}\ }\textbf {\bibinfo {volume}
  {11}},\ \bibinfo {pages} {713} (\bibinfo {year} {2015})}\BibitemShut
  {NoStop}%
\bibitem [{\citenamefont {Kemper}\ \emph {et~al.}(2025)\citenamefont {Kemper},
  \citenamefont {Alvertis}, \citenamefont {Asaduzzaman}, \citenamefont
  {Bakalov}, \citenamefont {Baron}, \citenamefont {Bierman}, \citenamefont
  {Burgstahler}, \citenamefont {Chundury}, \citenamefont {Das}, \citenamefont
  {Furches} \emph {et~al.}}]{yliu25}%
  \BibitemOpen
  \bibfield  {author} {\bibinfo {author} {\bibfnamefont {A.~F.}\ \bibnamefont
  {Kemper}}, \bibinfo {author} {\bibfnamefont {A.}~\bibnamefont {Alvertis}},
  \bibinfo {author} {\bibfnamefont {M.}~\bibnamefont {Asaduzzaman}}, \bibinfo
  {author} {\bibfnamefont {B.~N.}\ \bibnamefont {Bakalov}}, \bibinfo {author}
  {\bibfnamefont {D.}~\bibnamefont {Baron}}, \bibinfo {author} {\bibfnamefont
  {J.}~\bibnamefont {Bierman}}, \bibinfo {author} {\bibfnamefont
  {B.}~\bibnamefont {Burgstahler}}, \bibinfo {author} {\bibfnamefont
  {S.}~\bibnamefont {Chundury}}, \bibinfo {author} {\bibfnamefont {E.~R.}\
  \bibnamefont {Das}}, \bibinfo {author} {\bibfnamefont {J.}~\bibnamefont
  {Furches}}, \emph {et~al.},\ }\Eprint {https://arxiv.org/abs/2511.13882}
  {arXiv:2511.13882 [quant-ph]}  (\bibinfo {year} {2025})\BibitemShut {NoStop}%
\bibitem [{\citenamefont {Kannan}\ \emph {et~al.}(2026)\citenamefont {Kannan},
  \citenamefont {Prabhu}, \citenamefont {Khan}, \citenamefont {Sohoni},
  \citenamefont {Song}, \citenamefont {Roy}, \citenamefont {Senanian},
  \citenamefont {Fatemi}, \citenamefont {McMahon},\ and\ \citenamefont
  {Cotler}}]{Kannan26}%
  \BibitemOpen
  \bibfield  {author} {\bibinfo {author} {\bibfnamefont {I.}~\bibnamefont
  {Kannan}}, \bibinfo {author} {\bibfnamefont {S.}~\bibnamefont {Prabhu}},
  \bibinfo {author} {\bibfnamefont {S.~A.}\ \bibnamefont {Khan}}, \bibinfo
  {author} {\bibfnamefont {M.~M.}\ \bibnamefont {Sohoni}}, \bibinfo {author}
  {\bibfnamefont {X.}~\bibnamefont {Song}}, \bibinfo {author} {\bibfnamefont
  {S.}~\bibnamefont {Roy}}, \bibinfo {author} {\bibfnamefont {A.}~\bibnamefont
  {Senanian}}, \bibinfo {author} {\bibfnamefont {V.}~\bibnamefont {Fatemi}},
  \bibinfo {author} {\bibfnamefont {P.~L.}\ \bibnamefont {McMahon}},\ and\
  \bibinfo {author} {\bibfnamefont {J.}~\bibnamefont {Cotler}},\ }\href
  {https://arxiv.org/abs/2608.13521} {\bibinfo {title} {Exponential quantum
  advantage for learning signals with a single qubit}} (\bibinfo {year}
  {2026}),\ \Eprint {https://arxiv.org/abs/2608.13521} {arXiv:2608.13521
  [quant-ph]} \BibitemShut {NoStop}%
\bibitem [{\citenamefont {Yin}\ \emph {et~al.}(2023)\citenamefont {Yin},
  \citenamefont {Zhao}, \citenamefont {Yang}, \citenamefont {Guo},
  \citenamefont {Zhang}, \citenamefont {Li}, \citenamefont {Han}, \citenamefont
  {Liu}, \citenamefont {Xu}, \citenamefont {Chiribella} \emph
  {et~al.}}]{yin2023experimental}%
  \BibitemOpen
  \bibfield  {author} {\bibinfo {author} {\bibfnamefont {P.}~\bibnamefont
  {Yin}}, \bibinfo {author} {\bibfnamefont {X.}~\bibnamefont {Zhao}}, \bibinfo
  {author} {\bibfnamefont {Y.}~\bibnamefont {Yang}}, \bibinfo {author}
  {\bibfnamefont {Y.}~\bibnamefont {Guo}}, \bibinfo {author} {\bibfnamefont
  {W.-H.}\ \bibnamefont {Zhang}}, \bibinfo {author} {\bibfnamefont {G.-C.}\
  \bibnamefont {Li}}, \bibinfo {author} {\bibfnamefont {Y.-J.}\ \bibnamefont
  {Han}}, \bibinfo {author} {\bibfnamefont {B.-H.}\ \bibnamefont {Liu}},
  \bibinfo {author} {\bibfnamefont {J.-S.}\ \bibnamefont {Xu}}, \bibinfo
  {author} {\bibfnamefont {G.}~\bibnamefont {Chiribella}}, \emph {et~al.},\
  }\href {https://doi.org/10.1038/s41567-023-02046-y} {\bibfield  {journal}
  {\bibinfo  {journal} {Nat. Phys.}\ }\textbf {\bibinfo {volume} {19}},\
  \bibinfo {pages} {1122} (\bibinfo {year} {2023})}\BibitemShut {NoStop}%
\bibitem [{\citenamefont {Hou}\ \emph {et~al.}(2019)\citenamefont {Hou},
  \citenamefont {Wang}, \citenamefont {Tang}, \citenamefont {Yuan},
  \citenamefont {Xiang}, \citenamefont {Li},\ and\ \citenamefont
  {Guo}}]{hou2019control}%
  \BibitemOpen
  \bibfield  {author} {\bibinfo {author} {\bibfnamefont {Z.}~\bibnamefont
  {Hou}}, \bibinfo {author} {\bibfnamefont {R.-J.}\ \bibnamefont {Wang}},
  \bibinfo {author} {\bibfnamefont {J.-F.}\ \bibnamefont {Tang}}, \bibinfo
  {author} {\bibfnamefont {H.}~\bibnamefont {Yuan}}, \bibinfo {author}
  {\bibfnamefont {G.-Y.}\ \bibnamefont {Xiang}}, \bibinfo {author}
  {\bibfnamefont {C.-F.}\ \bibnamefont {Li}},\ and\ \bibinfo {author}
  {\bibfnamefont {G.-C.}\ \bibnamefont {Guo}},\ }\href
  {https://doi.org/10.1103/PhysRevLett.123.040501} {\bibfield  {journal}
  {\bibinfo  {journal} {Phys. Rev. Lett.}\ }\textbf {\bibinfo {volume} {123}},\
  \bibinfo {pages} {040501} (\bibinfo {year} {2019})}\BibitemShut {NoStop}%
\bibitem [{\citenamefont {Tasca}\ \emph {et~al.}(2011)\citenamefont {Tasca},
  \citenamefont {Gomes}, \citenamefont {Toscano}, \citenamefont
  {Souto~Ribeiro},\ and\ \citenamefont {Walborn}}]{tasca2011continuous}%
  \BibitemOpen
  \bibfield  {author} {\bibinfo {author} {\bibfnamefont {D.}~\bibnamefont
  {Tasca}}, \bibinfo {author} {\bibfnamefont {R.}~\bibnamefont {Gomes}},
  \bibinfo {author} {\bibfnamefont {F.}~\bibnamefont {Toscano}}, \bibinfo
  {author} {\bibfnamefont {P.}~\bibnamefont {Souto~Ribeiro}},\ and\ \bibinfo
  {author} {\bibfnamefont {S.}~\bibnamefont {Walborn}},\ }\href
  {https://doi.org/10.1103/PhysRevA.83.052325} {\bibfield  {journal} {\bibinfo
  {journal} {Phys. Rev. A}\ }\textbf {\bibinfo {volume} {83}},\ \bibinfo
  {pages} {052325} (\bibinfo {year} {2011})}\BibitemShut {NoStop}%
\bibitem [{\citenamefont {Fabre}\ \emph {et~al.}(2020)\citenamefont {Fabre},
  \citenamefont {Maltese}, \citenamefont {Appas}, \citenamefont {Felicetti},
  \citenamefont {Ketterer}, \citenamefont {Keller}, \citenamefont {Coudreau},
  \citenamefont {Baboux}, \citenamefont {Amanti}, \citenamefont {Ducci} \emph
  {et~al.}}]{fabre2020generation}%
  \BibitemOpen
  \bibfield  {author} {\bibinfo {author} {\bibfnamefont {N.}~\bibnamefont
  {Fabre}}, \bibinfo {author} {\bibfnamefont {G.}~\bibnamefont {Maltese}},
  \bibinfo {author} {\bibfnamefont {F.}~\bibnamefont {Appas}}, \bibinfo
  {author} {\bibfnamefont {S.}~\bibnamefont {Felicetti}}, \bibinfo {author}
  {\bibfnamefont {A.}~\bibnamefont {Ketterer}}, \bibinfo {author}
  {\bibfnamefont {A.}~\bibnamefont {Keller}}, \bibinfo {author} {\bibfnamefont
  {T.}~\bibnamefont {Coudreau}}, \bibinfo {author} {\bibfnamefont
  {F.}~\bibnamefont {Baboux}}, \bibinfo {author} {\bibfnamefont
  {M.}~\bibnamefont {Amanti}}, \bibinfo {author} {\bibfnamefont
  {S.}~\bibnamefont {Ducci}}, \emph {et~al.},\ }\href
  {https://doi.org/10.1103/PhysRevA.102.012607} {\bibfield  {journal} {\bibinfo
   {journal} {Phys. Rev. A}\ }\textbf {\bibinfo {volume} {102}},\ \bibinfo
  {pages} {012607} (\bibinfo {year} {2020})}\BibitemShut {NoStop}%
\bibitem [{\citenamefont {Yamazaki}\ \emph {et~al.}(2023)\citenamefont
  {Yamazaki}, \citenamefont {Arizono}, \citenamefont {Kobayashi}, \citenamefont
  {Ikuta},\ and\ \citenamefont {Yamamoto}}]{yamazaki2023linear}%
  \BibitemOpen
  \bibfield  {author} {\bibinfo {author} {\bibfnamefont {T.}~\bibnamefont
  {Yamazaki}}, \bibinfo {author} {\bibfnamefont {T.}~\bibnamefont {Arizono}},
  \bibinfo {author} {\bibfnamefont {T.}~\bibnamefont {Kobayashi}}, \bibinfo
  {author} {\bibfnamefont {R.}~\bibnamefont {Ikuta}},\ and\ \bibinfo {author}
  {\bibfnamefont {T.}~\bibnamefont {Yamamoto}},\ }\href
  {https://doi.org/10.1103/PhysRevLett.130.200602} {\bibfield  {journal}
  {\bibinfo  {journal} {Phys. Rev. Lett.}\ }\textbf {\bibinfo {volume} {130}},\
  \bibinfo {pages} {200602} (\bibinfo {year} {2023})}\BibitemShut {NoStop}%
\bibitem [{\citenamefont {Descamps}\ \emph {et~al.}(2024)\citenamefont
  {Descamps}, \citenamefont {Keller},\ and\ \citenamefont
  {Milman}}]{descamps2024gottesman}%
  \BibitemOpen
  \bibfield  {author} {\bibinfo {author} {\bibfnamefont {{\'E}.}~\bibnamefont
  {Descamps}}, \bibinfo {author} {\bibfnamefont {A.}~\bibnamefont {Keller}},\
  and\ \bibinfo {author} {\bibfnamefont {P.}~\bibnamefont {Milman}},\ }\href
  {https://doi.org/10.1103/PhysRevLett.132.170601} {\bibfield  {journal}
  {\bibinfo  {journal} {Phys. Rev. Lett.}\ }\textbf {\bibinfo {volume} {132}},\
  \bibinfo {pages} {170601} (\bibinfo {year} {2024})}\BibitemShut {NoStop}%
\end{thebibliography}
\end{document}